\theoremstyle{definition}
\newtheorem{theorem}{Theorem}
\newtheorem*{theorem*}{Theorem}
\newtheorem{assumption}{Assumption}
\newtheorem*{corollary}{Corollary}
\newtheorem{definition}{Definition}
\newtheorem{lemma}{Lemma}
\newtheorem{proposition}[theorem]{Proposition}
\begin{document}



\title{Sorting with Teams\thanks{%
\baselineskip12.5pt We thank Anmol Bhandari, Carter Braxton, Hector Chade, Jan Eeckhout, Fatih Guvenen, Kyle Herkenhoff, Rasmus Lentz, Ilse Lindenlaub, Jeremy Lise, Paolo Martellini, Simon Mongey, Chris Moser, Guillaume Sublet, and Ruodu Wang for useful discussion.}}
\author{ Job Boerma \\
{\small { \hspace{2 cm} University of Wisconsin-Madison \hspace{2 cm} } }\\  \vspace{-0.5 cm}
\and  Aleh Tsyvinski \\
{\small { \hspace{4 cm}  Yale University\hspace{4 cm} } }\\  \vspace{-0.5 cm}
\and  Alexander P. Zimin \\
{\small { \hspace{4 cm}  MIT and HSE \hspace{4 cm} } }\\
}
\date{\vspace{0.4 cm} November 2023 \vspace{-0.2 cm} }
\maketitle

\begin{abstract}
\fontsize{12.0pt}{18.0pt} \selectfont


\noindent
We fully solve a sorting problem with heterogeneous firms and multiple heterogeneous workers whose skills are imperfect substitutes. We show that optimal sorting, which we call mixed and countermonotonic, is comprised of two regions. In the first region, mediocre firms sort with mediocre workers and coworkers such that the output losses are equal across all these teams (mixing). In the second region, a high skill worker sorts with low skill coworkers and a high productivity firm (countermonotonicity). We characterize the equilibrium wages and firm values. Quantitatively, our model can generate the dispersion of earnings within and across US firms.

\vspace{0.6cm} \noindent {\bf JEL-Codes}: J01, D31, C78. \\
{\bf Keywords}: Sorting, Teams, Assignment.
\end{abstract}

\renewcommand{\thefootnote}{\fnsymbol{footnote}} \renewcommand{%
\thefootnote}{\arabic{footnote}} 
\thispagestyle{empty} \setcounter{page}{0}

\fontsize{12.3pt}{22.0pt} \selectfont
\newpage

\section{Introduction}

How do heterogeneous workers sort to work in teams at heterogeneous
firms? Which workers work together, and what firm do they work for?
How do worker earnings depend on their employer, their coworkers,
as well as other firms and workers in the economy? How does the firm
value vary with the workers they hire? To answer these questions,
we study the sorting of multiple heterogeneous workers into teams
at heterogeneous firms when workers' skills are imperfect substitutes.

The traditional approach to one worker, one firm assignment problems
is to provide conditions under which sorting is assortative. \citet{Becker:1973}
shows that optimal sorting is positive under supermodular technologies
and negative under submodular technologies.\footnote{\citet{Sattinger:1993}, \citet{Chiappori:2016}, \citet{Chade:2017},
and \citet{Eeckhout:2018r} provide a comprehensive review of this
literature.} Positive sorting, or comonotonicity, readily extends to settings
with multiple workers and is generally optimal with supermodular technologies. On the other hand, negative sorting, or countermonotonicity, does
not have a simple multi-type equivalent.

An important open question is how to characterize sorting of heterogeneous
firms with multiple heterogeneous workers when technology is submodular.
We show that this problem can be analyzed as a multimarginal optimal
transport problem and fully characterize the solution.

We are the first to fully solve a sorting problem with firm heterogeneity
and multiple heterogeneous workers whose skills are imperfect substitutes,
that is, when the technology is submodular. In doing so,
this paper makes two contributions. First, we show that an equilibrium
assignment is characterized by two types of assignment regions. In
the first region, mediocre firms sort with mediocre workers and coworkers
so that output losses are equal across all these teams (mixed assignment).
In the second region, high skill workers sort with low skill coworkers
and a high productivity firm, while high productivity firms employ
low skill workers and a single high skill coworker (countermonotonic
assignment). We call this a mixed and countermonotonic assignment.
Second, we fully characterize the equilibrium assignment as well as
workers' wages and firm values. We illustrate our theory with a quantitative
application to earnings dispersion within and across U.S. firms.

We develop our findings using a transferable utility assignment model
with worker peer effects and firm heterogeneity. In order to illustrate
our theory, consider a simple production and team technology. The output of a firm is the product of the value of their project,
or productivity $z$, and the probability their worker team is able
to solve production problems $q$. When workers independently do not
know how to solve a problem with probability $x_{i}$, team quality
with workers $(x_{1},x_{2},\dots,x_{m})$ is $q=1-\prod\limits _{i=1}^{m}x_{i}$,
and output is $y(x_{1},x_{2},\dots,x_{m},z)=zq=\big(1-\prod\limits _{i=1}^{m}x_{i}\big)z$.
Production is supermodular in team quality and productivity, and submodular
in worker skills and productivity. Workers are imperfect substitutes,
with peer effects in output scaling with the project value.


We show there are two regions to an optimal assignment. The first
region is the mixed set. Mixing means that the output loss, $z\prod x_{i}$,
is identical for all teams. We start with the observation that mixing
is optimal whenever it is feasible. Equalizing output loss $z\prod x_{i}$
across teams is optimal as average expected losses attain their lower
bound if and only if output losses are equal across all teams following
Jensen's inequality. Mixing implies that firms with identical projects
may hire different teams of workers which have the same overall  quality.

Mixing, however, does not fully describe equilibrium
assignment as it is not feasible everywhere. We show that optimal
sorting features the maximal mixed set combined with countermonotonic
regions. To develop intuition for the optimality of countermonotonic
sets, we note that at the boundary of the mixed set, the
lowest value project is sorted with identical low-skill workers (high $x_i$).
Similarly, a project which has zero value employs workers who do not
know how to solve a problem ($x_i=1$). Applying this reasoning to other low
value projects outside the mixed set, these projects are countermonotonically
sorted with low-skill workers.

In order to establish equilibrium properties of worker wages and firm
values we characterize the dual problem. We show that the derivative of
the wage schedule is equal to the marginal worker product, or $-z\prod\limits _{j\neq i}x_{j}$.
The increase in output obtained by replacing a worker with a slightly
more skilled worker, keeping their coworkers and their firm unchanged,
has to equal the increase in wages necessary to hire the higher-skill
worker. More skilled workers work with coworkers and a firm that has
greater expected output losses, implying a convex wage schedule.

Our main result is derived in a setting with heterogeneous distributions
of workers and firms and teams of $n\ge2$ members. We prove that a
mixed and countermonotonic assignment is optimal, derive its dual,
and thus characterize an equilibrium. We establish the optimality of
a mixed and countermonotonic assignment by using a majorization inequality
\citep{Hardy:1929,Karamata:1932,Pecaric:1984}.\footnote{We prove there exists a mixed set by connecting our assignment model to
work on risk aggregation by \citet{Wang:2016} that describes when
there exists a dependence structure such that the sum of idiosyncratic
risks is constant.} We combine our characterization of optimal assignments
with a duality argument for multimarginal optimal transport problems \citep{Kellerer:1984}
to derive worker wages and firm values. Finally, we discuss how
our results apply to more general production structures.

Having characterized equilibrium, we quantitatively illustrate the
theory using administrative U.S. earnings records. Since a characteristic
feature of our paper among assignment models is that we obtain distinct
distributions of earnings within heterogeneous firms, we apply our
model to evaluate the dispersion in earnings within and across U.S.
firms.

Our model can generate the dispersion of earnings as well as its decomposition
between and within firms as empirically documented by \citet{Song:2019}.
We use the cross-sectional distribution of earnings and the decomposition
of earnings variation into within-firm and between-firm variation
to estimate the underlying distributions of worker skills and firm
values.  We show that only changing the distribution of firm projects
between 1981 and 2013, the share of within-firm earnings dispersion
would have increased by 22 percentage points. Changing the distribution
of workers, we show that the share of within-firm earnings dispersion
would have decreased by 38 percentage points. That is, our counterfactual
analysis shows that both the changes in the worker and project distributions
are important in the model in generating the observed change in earnings
dispersion.


\vspace{0.35cm}
\noindent \textbf{Related Literature}. There is an extensive literature on
one-to-one assignment models in the tradition of \citet{Becker:1973}.
This literature, in particular, analyzes conditions under which sorting
is positive or negative.\footnote{\citet{Boerma:2023} provides a closed-form solution to a one-to-one
assignment that is neither positive sorting nor negative sorting,
which they call composite sorting.} Our contribution is to characterize sorting for an assignment problem
with heterogeneous firms and multiple heterogeneous workers whose
skills are imperfect substitutes. We provide a complete equilibrium
characterization for this problem, and show this equilibrium features
neither positive nor negative sorting.

\noindent 




The closest to our work is a literature that studies negative sorting
with more than two agents. \citet{Ahlin:2017}, \citet{Chade:2018}
and \citet{Eeckhout:2018r} argue it is neither evident how to define
negative sorting in this case, nor how to characterize an optimal
assignment when production is submodular, and also derive insightful
partial characterizations of the equilibrium. \citet{Ahlin:2017}
also develops a quantitative application to group formation
in villages.\footnote{See also \citet{Ahlin:2015} for a study of a role of group sizes
in group lending and \citet{SaintPaul:2001} for an assignment model
with different types of intra-firm spillovers.} In our paper, we fully characterize optimal sorting for dimensions
greater than two with a submodular technology.


An alternative approach to multiworker firms in assignment models
is due to \citet{Eeckhout:2018}.\footnote{\citet{Kelso:1982} provide the gross substitutes conditions for equilibrium
existence in a related many worker-to-one firm assignment model with
finite workers and firms and a more general production technology.} They develop a model in which firms sort with a single type of worker,
and choose how many workers of this type to employ. \citet{Eeckhout:2018}
develop conditions under which sorting is positive or negative. Importantly,
they assume the firm technology is additively separable between workers
of different types, so the marginal worker product is independent
from other types of workers within the firm. This paper differs from
\citet{Eeckhout:2018} by focusing on firms of fixed size and by incorporating
peer effects. Workers are instead imperfect substitutes under a submodular
technology, and their marginal product does depend on their
coworkers.\footnote{Recent quantitative work by \citet{Jarosch:2021} and \citet{Herkenhoff:2018}
studies production and learning of workers in teams in labor markets with and without frictions.}

This paper relates to a growing literature which uses optimal transport
theory to solve economic problem, see \citet{Galichon:2018} for a
comprehensive overview. Specifically, there is an important line of
work on multidimensional sorting and multi-market problems, working
with optimal transport theory, such as  \citet{Dupuy:2014}, \citet{McCann:2015},
\citet{Lindenlaub:2017}, \citet{Chiappori:2017}, \citet{Chiappori:2017b},
\citet{Ocampo:2018}, \citet{Lindenlaub:2020} and \citet{Galichon:2021b}.
Our assignment problem instead results in a sorting problem where
each dimension of heterogeneity sorts with an endogenous multidimensional
distribution.


Another literature, following \citet{Garicano:2006}, solves hierarchical
assignment models with heterogeneous workers.\footnote{An exposition of the hierarchical assignment problem is presented
in \citet{Garicano:2004}. See \citet{Garicano:2015} for a review.} Similar to \citet{Garicano:2000}, workers differ in their knowledge,
which governs the probability that they know how to solve a production
problem. Knowledge is assumed to be cumulative, so that more skilled
workers know how to solve a problem whenever less skilled workers
do. A key implication is that production is supermodular in worker
skill, so that equilibrium sorting is positive. In line with these
papers, our workers differ in the probability that they know how to
solve production problems. Knowledge, however, is not cumulative,
allowing for the possibility that less skilled workers know how to
solve a problem when more skilled workers do not. As a result, our
production technology is not supermodular, and hence optimal
sorting is not positive.

To characterize our equilibrium, we build on the optimal transport
literature in mathematics that develops tools to solve multimarginal transport
problems.\footnote{See, for example, a review in \citet{Pass:2011}. He concludes
that the extension of the classical Monge-Kantorovich problem to the
case of more than two marginal distributions is not well understood.}
While this is an active area of recent research with a number of applications,
understanding of the problem is far from complete and is fractured
(for examples of different settings and costs, see \citet{Gangbo:1998},
\citet{Carlier:2010},  \citet{Kim:2014}, and \citet{Gladkov:2020}).
Our paper studies a submodular cost function and introduces firm production,
non-uniform distributions of workers and firms, and uses this to understand
equilibrium sorting, worker wages, and firm values. In the literature
on risk aggregation and insurance, \citet{Bernard:2014}, \citet{Embrechts:2014},
\citet{Puccetti:2015} and \citet{Wang:2016} study properties of
aggregate risk as the sum of individual risks. Similar to them, we
study the minimization of output losses (risks) incurred by teams
of workers and firms in the economy. Different from them, our interest
lie in the distribution of output losses and characterizing sorting
patterns of workers that generate maximum aggregate output. Moreover,
we characterize the dual problem to characterize the distribution
of wages and firm values.


\section{Model} \label{s:model}

We study an economy in which agents with heterogeneous skills choose their firm and coworkers to work with. This is an assignment problem extended to incorporate multiple workers in each firm. Our setup results in a sorting problem between multiple workers with each firm for a submodular production technology. 


\subsection{Environment}

\noindent \textbf{Agents}. There are $m \geq 2$ groups of risk-neutral workers and a single group of risk-neutral firms. Each group has mass one. 

Workers differ in skills. The skill is indexed by a single number $x_i \in X = [0,1]$, which captures the probability that the worker does not know how to solve a problem. For example, a worker with skill $x_i = 0.1$ knows how to solve a problem with 90 percent probability. High skill workers have low $x_i$. The distribution of each group of workers has cumulative distribution function $F_x(x_i)$ and corresponding continuous density function $f_x(x_i)$. 


Firms differ in productivity $z \in Z = [0,1]$, capturing the value of their projects. High value firms have a high $z$. The distribution of firms has cumulative distribution function $F_z(z)$ and continuous density function $f_z(z)$. The cumulative distribution functions are strictly increasing and continuous. The inverse distribution function for firms is so that $I_z(p_z)$ for percentile $p_z \in [0,1]$ is the unique number $z$ that satisfies $F_z(z) = p_z$. Thus, the percentile $p_z$ corresponds to the value $z$ in the firm distribution. The inverse distribution function for workers $I_{x}$ is defined analogously.

\vspace{0.35cm}
\noindent \textbf{Technology}. A team comprises of $m$ workers that encounter a problem in production. These workers solve a problem when at least one of them knows how to solve it. Since the probability that worker $i$ does not know the solution is $x_i$, independent of their coworker knowing the solution, the probability of successful production with workers $(x_1, x_2, \dots, x_m)$ is:
\begin{equation}
q = h(x_1,x_2, \dots, x_m) = 1 - \prod\limits_{i=1}^m x_{i} .  \label{e:team_quality}
\end{equation}
The team quality function $h$ is a symmetric, submodular function in worker types, so workers are substitutes.



A firm of type $z$ that employs a team of quality $q$ produces output according to:
\begin{equation}
g(q,z) = q  z . \label{e:yp_crude}
\end{equation}
The production function $g$ is supermodular in team quality $q$ and firm type $z$. Team quality and the project are complements in production. Combining the technology (\ref{e:yp_crude}) with the team quality function (\ref{e:team_quality}), firm output $y$ is: 
\begin{equation}
y (x_1,x_2,\dots,x_m,z) = \Big(1 - \prod\limits_{i=1}^m x_{i} \Big)  z. \label{yp}
\end{equation} 
Firm output is a submodular function in worker and firm type.\footnote{This technology nests one-to-one assignment models with positive and negative sorting. If the distribution for team quality $q$ is exogenous, sorting between teams and firms is positive. High quality teams work on the most valuable projects. On the other hand, absent firm heterogeneity, $z=\bar{z}$, sorting is negative in the case of two workers, or $m=2$. Low and high skill workers form teams.}  The marginal product of worker $x_i$ is:
\begin{equation}
m (x_i) = y_i (x_1,x_2,\dots,x_m,z) = - z \prod\limits_{j\neq i} x_{j}  \leq 0 .\label{mwp} 
\end{equation}  
When a worker is less skilled, that is, less likely to know how to solve a problem, output decreases. The marginal product of every worker depends only on the skills of their coworker and the firm they work with.



\vspace{0.35cm}
\noindent \textbf{Assignment}. An assignment prescribes for every worker coworkers to work with and a firm to work for. Given a distribution of workers $F_x$ and a distribution of firms $F_z$, the set of feasible assignment functions is $\Pi := \Pi(F_x,\dots,F_x,F_z)$ which is the set of probability measures $\pi$ on $X \times \dots \times X \times Z$ such that the marginal distributions of $\pi$ onto $X$ and $Z$ are equal to $F_x$ and $F_z$ respectively. Feasibility of an assignment function is equivalent to labor market clearing, that is, all workers and firms are sorted. 

 
\subsection{Assignment Problem}

We solve two problems to characterize an equilibrium.\footnote{The equilibrium definition is standard and is presented in Appendix \ref{a:eqdefn} for completeness.}  

\vspace{0.35cm}
\noindent \textbf{Primal Problem}. We first solve a primal problem to find an optimal sorting:
\begin{equation}
\max_{\pi \in \Pi} \int y (x_1,x_2,\dots,x_m,z ) \text{d} \pi . \label{pp}
\end{equation}
This problem is to choose an assignment $\pi$ to maximize production. It is equivalent to, in terms of choosing an optimal assignment, minimizing output losses. Output losses are the product of the project value $z$ and the probability of failure by a team of workers which is given by $x_1 \cdots x_m$, that is, a loss $x_1 \cdots x_m z$. We thus equivalently represent the planning problem as:
\begin{equation}
\min_{\pi \in \Pi} \; \int x_1 \cdots x_m z \text{d} \pi , \label{ppmin}
\end{equation}
which is an optimal transport problem in the tradition of \citet{Monge:1781} and \citet{Kantorovich:1942}, with multiple marginal distributions. The number of marginal distributions is given by $n = m+1$, comprising $m$ workers in a team and a project.




\vspace{0.35cm}
\noindent \textbf{Dual Problem}. To obtain equilibrium wages $w$ and firm values $v$, we solve a dual problem. The dual problem is to choose functions $w$ and $v$ that solve:
\begin{equation}
\min \; \sum^m_{i=1} \int w(x_i) \text{d} F_x + \int v(z) \text{d} F_z \label{e:pp_dual},
\end{equation}
subject to the constraint that $\sum w(x_i) + v(z) \geq y(x_1,x_2,\dots,x_m,z)$ for any $(x_1,x_2,\dots,x_m,z)$. 


\section{Intuition for the Main Result} \label{s:heuristic}

This section provides intuition for optimal sorting. We  consider
the case where all distributions are identical. 




The core idea of the optimal sorting is based on three observations.
First, Jensen's inequality bounds aggregate output losses from below:\footnote{For any convex function $\zeta$ and any random variable $X$, Jensen's
inequality states that $E[\zeta(X)]\geq\zeta(E[X])$. Here, $\zeta$
is the exponential function, and the random variable $X=\log x_{1}\cdots x_{m}z$
is the logarithmic output loss.} 
\begin{equation}
\int x_{1}\cdots x_{m}z\text{d}\pi\geq\exp\Big(\int\log x_{1}\cdots x_{m}z\text{d}\pi\Big). \label{e:intuition1}
\end{equation}
Second, the lower bound is independent of assignment $\pi$ due to
feasibility: 
\begin{equation}
\exp\Big(\int\log x_{1}\cdots x_{m}z\text{d}\pi\Big)=\exp\Big(\int\log x_{1}\text{d}F+...+\int\log x_{m}\text{d}F+\int\log z\text{d}F\Big). \label{e:intuition2}
\end{equation}
The right hand side is a constant $\mathcal{C}$ as $\int\log z\text{d}F$
and $\int\log x_{i}\text{d}F$ depend only on the marginal distributions.
Combining (\ref{e:intuition1}) and (\ref{e:intuition2}), the aggregate output loss is bounded below
by $\mathcal{C}$: 
\begin{equation}
\int x_{1}\cdots x_{m}z\text{d}\pi\geq\mathcal{C}.
\end{equation}
Third, by Jensen's inequality the minimum given by the right-hand
side is attained when the output loss is
equal across all teams and equal to $\mathcal{C}$. We refer to the
set of teams with the same output loss $\mathcal{C}$ as mixed.\footnote{\citet{Wang:2011} define assignments with equal output loss
as completely mixed. Such sortings are also studied by \citet{Gaffke:1981}, \citet{Ruschendorf:2002}, and \citet{Knott:2006}.} When an assignment is mixed, it is optimal as it minimizes aggregate
losses (\ref{ppmin}).

While mixing is optimal when it is feasible, it is not feasible everywhere.
Consider a project $z=0$ in a mixed assignment. Then the output loss
$x_{1}x_{2}\dots x_{m}z$ has to be equal to zero for all teams. Thus,
 mixed is not feasible everywhere.\footnote{Generally, there exists no mixed assignment for atomless distributions
with full support on the unit interval.}

We show that the optimal assignment is the largest possible set of mixed teams
combined with countermonotonic sets.\footnote{The teams $(x_{1},x_{2},\dots,x_{m},z)$ and $(\hat{x}_{1},\hat{x}_{2},\dots,\hat{x}_{m},\hat{z})$
are countermonotonic in $k$ if $x_{k}<\hat{x}_{k}$ implies that
$x_{i}\geq\hat{x}_{i}$ for all $i\neq k$ and $z\geq\hat{z}$, or
if $x_{k}>\hat{x}_{k}$ implies that $x_{i}\leq\hat{x}_{i}$ for all
$i\neq k$ and $z\leq\hat{z}$, or if $\hat{x}_{k}=x_{k}$. A set
of teams is countermonotonic in $k$ if all pairs of teams are countermonotonic
in $k$.} The intuition for this result is as follows. 


Jensen's inequality suggests that it is optimal to have the largest
set of mixed teams. Consider a mixed set with the workers and firms between percentiles $\underline{p}$ and $\bar{p}$ in their distribution, that is, the values $[I(\underline{p}),I(\bar{p})]$ from each distribution.  We now show that $\mathcal{C}=I(\underline{p})I^{m}(\bar{p})$
for the largest possible set of mixed teams. The maximal output loss on a low-value project
$z=I(\underline{p})$ is attained by pairing this project with the
lowest-skill workers $x_{i}=I(\bar{p})$ for all $i$. This team incurs
an output loss $I(\underline{p})I^{m}(\bar{p})$ and thus  $\mathcal{C}\leq I(\underline{p})I^{m}(\bar{p})$.
For the maximal feasible mixed set, this condition has to hold with
equality $\mathcal{C}=I(\underline{p})I^{m}(\bar{p})$. 

We next describe the reasoning behind the countermonotonic sets. First,
note that the lowest value project $z=I(\underline{p})$ in the mixed
set is paired with $m$ identical low skill workers $I(\bar{p})$
in this set to attain the loss $\mathcal{C}=I(\underline{p})I^{m}(\bar{p})$.
Second, it is naturally optimal to pair the lowest value project $z=0$
with $m$ lowest skill workers $x_{i}=1$, attaining zero loss.
Interpolating this logic between the project values $z=0$ and $z=I(\underline{p})$,
projects below $I(\underline{p})$ are sorted countermonotonically
to $m$ identical low skill workers. This generates a countermonotic
set for low value projects $z\in[0,I(\underline{p}))$ with $m$ identical
low skill workers $x_{i}\in(I(\bar{p}),1]$. By symmetry, high skill
workers $x_j\in[0,I(\underline{p}))$ are paired with $m-1$ low skill coworkers $x_{i}\in(I(\bar{p}),1]$ and a high value
project $z\in(I(\bar{p}),1]$.  Since both low value projects $z$ and high skill workers
$x_{j}$ for all $j\neq i$ are assigned to teams with low skill workers
$x_{i}$, percentiles $[0,p)$ of firms $z$ and high-skill workers
$x_{j}$ absorb measure $mp$ of low-skill coworker $x_{i}$. Percentile
$p\in[0,\underline{p})$ of the project distribution is thus sorted
with percentile $1-mp$ of the worker distributions. The least valuable
project $p=0$ employs the least productive workers at percentile
$1$; a project at percentile $\underline{p}$ is sorted with identical
workers ranked $1-m\underline{p}=\bar{p}$. This team incurs loss
$I^{m}(\bar{p})I(\underline{p})=\mathcal{C}$, identical to the loss
for all teams on the mixed set.

\begin{figure}[t!]
\begin{center}
\subfigure{\includegraphics[trim=0.0cm 0.0cm 0.0cm 0.0cm, width=0.56\textwidth,height=0.56\textheight,angle=270]{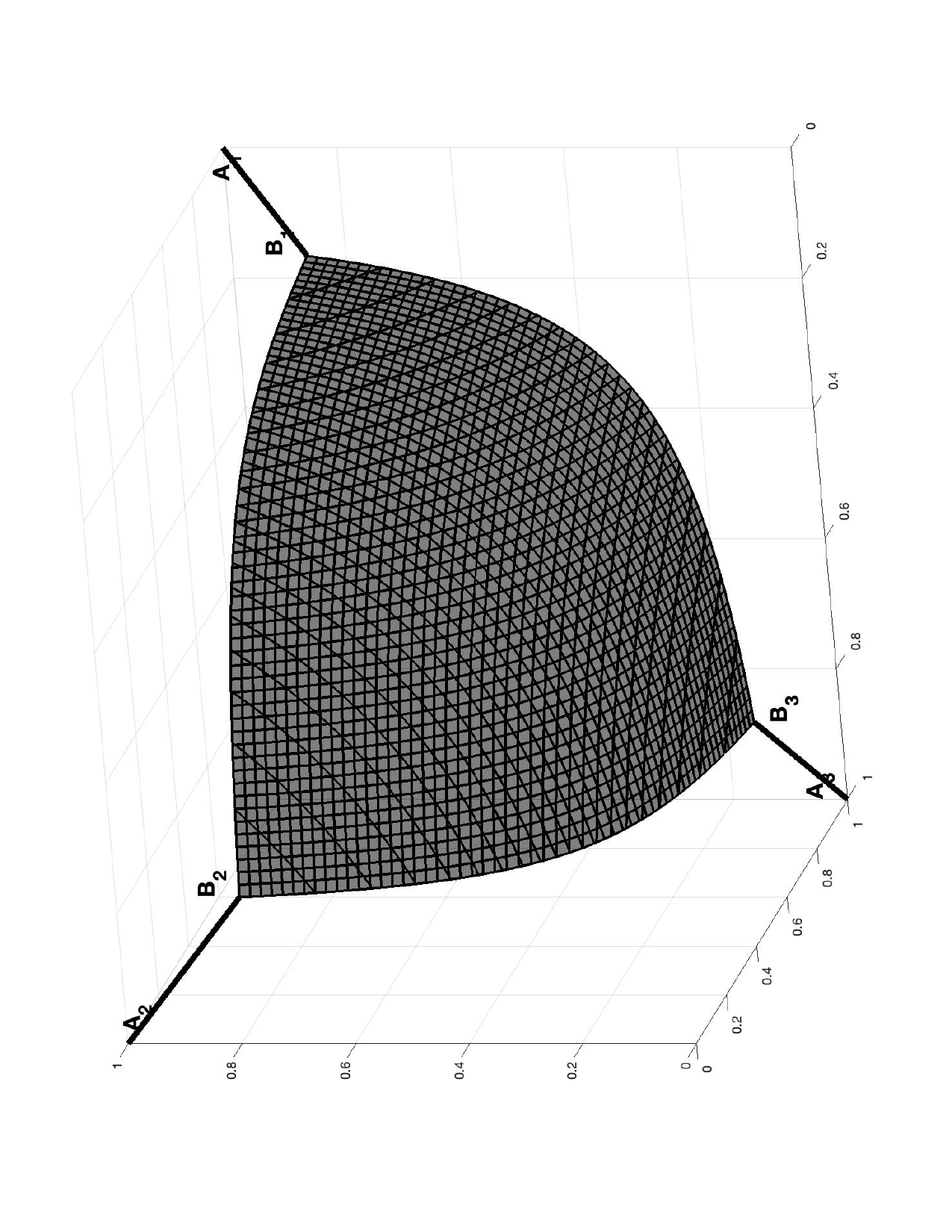}}
\end{center}\vspace{-1.5 cm}
\caption{Support of the Assignment} \label{f:support}
{\scriptsize \vspace{.2 cm} \Cref{f:support} plots an example for the support for an optimal assignment when each team employs two workers, $m=2$. The $x$-axis and $y$-axis show the skill of the worker $x_1$ and their coworker ${x_2}$, while the $z$-axis plots the value of the firm project $z$.}
\end{figure}


In sum, we provided intuition for the following solution. Pair high
skill workers with low skill workers on valuable projects, until,
at some point, mediocre workers can be paired with firms such that
output losses are identical for all mediocre teams. \Cref{f:support}
visualizes the support of this candidate solution with two workers.
The countermonotonic sets corresponding to teams with high skill workers
$x_{1}$ and $x_{2}$ is represented by $A_{1}B_{1}$ and $A_{2}B_{2}$,
while the countermonotonic set that corresponds to teams with low
value projects $z$ is $A_{3}B_{3}$. The mixed set is
represented by $B_{1}B_{2}B_{3}$.

\section{Main Results}\label{s:results}

We next show that a mixed and countermonotonic assignment solves the planning problem with heterogeneous continuous marginal distributions and characterize the dual solution for wages and firm values.

\subsection{Mixed and Countermonotonic Assignment}

We generalize the assignment in Section \ref{s:heuristic} to heterogeneous marginal distributions by developing
three insights from the intuition for the case of homogenous marginal distributions.

First, in the case of homogenous distributions, each distribution is segmented
into three parts: low values $[0,I(\underline{p}\vphantom{\bar{p}})]$,
medium values $[I(\underline{p}\vphantom{\bar{p}}),I(\bar{p})]$,
and high values $[I(\bar{p}),1]$. The measure of medium values is $\bar{p}-\underline{p} = 1-q$. With heterogeneous distributions,
we segment all distributions into three parts, each indexed by $i \in \{ 1,\dots, n\}$: low values $[0,I_{i}(\underline{p}\vphantom{\bar{p}}_{i})]$,
medium values $[I_{i}(\underline{p}\vphantom{\bar{p}}_{i}),I_{i}(\bar{p}_{i})]$,
and high values $[I_{i}(\bar{p}_{i}),1]$.\footnote{When we use $i \in \{ 1, \dots, n\}$ to enumerate the different marginal distributions, we use $1 \leq i \leq m$ to refer to the respective worker distributions, and $n$ to reflect the firm distribution.} The measure
of medium values is $\bar{p}_{i}-\underline{p}\vphantom{\bar{p}}_{i} = 1-q$, identical across distributions. 

Second, we consider the ranking of output losses across teams. In the
homogenous case, the loss is largest for teams in the mixed set which
has measure $1-q$. The remaining teams are in countermonotonic sets with measure $q$.
For each percentile $t\leq q$, there are $n$ teams located at the
$t$-th percentile of the loss distribution, one in each countermonotonic
set. Each team consists of a low value $I(\frac{t}{n})$ from one
of the distributions and $m$ high values $I(1-\frac{m}{n}t)$ from
the remaining distributions. With heterogenous distributions, the loss is as well largest for teams in the mixed set which has measure $1-q$. There
are also $n$ teams located at every percentile $t\leq q$. Each team
consists of a low value $I_{i}(\underline{t}_{i}(t))$ from one of
the distributions and $m$ high values $I_{j}(1-\bar{t}_{j}(t))$
from the remaining distributions $j\neq i$, where the functions $\underline{t}_{i}$ and $\bar{t}_{i}$ are both increasing. The function $\underline{t}_{i}(t)$
gives the percentile of the low value drawn from distribution $i$
at the $t$-th percentile of the team loss distribution. Similarly, the function
$\bar{t}_{j}(t)$ yields the percentile of the high value drawn from
distribution $j$ at the $t$-th percentile of the team loss distribution.
In the homogenous case, $\underline{t}_{i}(t)=\frac{t}{n}$ and $\bar{t}_{j}(t)=\frac{m}{n}t$.

Finally, in the case of homogeneous distributions, mixed teams contain agents in $[I(\underline{p}),I(\bar{p})]$, and
their losses are all identical and equal to $I(\underline{p})I(\bar{p})^{m}$.
With heterogenous distributions, mixed teams contain values in $[I_{i}(\underline{p}\vphantom{\bar{p}}_{i}),I(\bar{p}_{i})]$
and attain identical losses $\mathcal{C}=\prod x_{i}$, where for all $i$:
\begin{equation}
\mathcal{C} = I_{i}(\underline{p}\vphantom{\bar{p}}_{i}) \prod\limits _{j\neq i} I_{j}(\bar{p}_{j}) .\label{e:bowlloss}
\end{equation}

We use the generalization of features of the optimal sorting with homogeneous
distributions to define a mixed and countermonotonic assignment for the case
of heterogeneous distributions.

\begin{definition}\label{d:branchbowl} A \underline{mixed and countermonotonic assignment}
is a collection of continuous increasing functions $\{(\underline{t}_{i},\bar{t}_{i})\}_{i=1}^{n}$,
thresholds $\{(\underline{p}\vphantom{\bar{p}}_{i},\bar{p}_{i})\}_{i=1}^{n}$, and an
assignment $\pi\in\Pi(\{F_{i}\}_{i=1}^{n})$ satisfying:

\begin{enumerate}[noitemsep]
\item The mass of agents in the mixed sets $[I_{i}(\underline{p}\vphantom{\bar{p}}_{i}),I(\bar{p}_{i})]$ is identical for all distributions $1-q=\bar{p}_{i}-\underline{p}\vphantom{\bar{p}}_{i}$
for all $i$.
\item The support of assignment $\pi$ consists of ($1$) a mixed assignment; and ($2$)
$n$ countermonotonic sets so that for all $0\leq t\leq q$ the assignment is
countermonotonic: \\
 $b_{i}(t)=(b_{i1}(t),\dots,b_{in}(t))$ with $b_{ii}(t)=I_{i}(\underline{t}_{i}(t))$
and $b_{ij}(t)=I_{j}(1-\bar{t}_{j}(t))$ if $i\neq j$.
\item For each $i$, $b_{i}(t)$ starts at $(1,\dots,1,0,1,\dots,1)$ and
ends at $(I_{1}(\bar{p}_{1}),\dots,I_{i}(\underline{p}\vphantom{\bar{p}}_{i}),\dots,I_{n}(\bar{p}_{n}))$. Equivalently, this means that $\underline{t}_{i}(0)=\bar{t}_{i}(0)=0$,
$\underline{t}_{i}(q)=\underline{p}_{i}$, and $1-\bar{t}_{i}(q)=\bar{p}_{i}$
for each $i$.
\item For each percentile $t\leq q$, there are $n$ teams located at the $t$-th percentile, when teams are ranked according to their output losses. Team $i$ is located on the countermonotonic set at $b_{i}(t)$. All other teams are mixed and have identically large losses equal to $\mathcal{C}$ given by (\ref{e:bowlloss}).
\end{enumerate}
\end{definition}

\subsection{Planning Problem} \label{s:planningproblem}

In this section, we prove our main result that a mixed and countermonotonic assignment, following Definition \ref{d:branchbowl}, is optimal.

\vspace{0.1 cm}
\begin{theorem} \label{p:support2}
Any mixed and countermonotonic assignment $\pi \in \Pi$ solves the planning problem (\ref{pp}).
\end{theorem}

\vspace{0.1 cm}
\noindent We provide a sketch of the proof. Let $\gamma$ be the distribution of logarithmic output losses induced by assignment $\pi$. The associated inverse cumulative distribution function is $I_\gamma$. That is, logarithmic losses by a team at percentile $p$ in the loss ranking are given by $I_\gamma(p)$ when teams are sorted in increasing loss order. We show that an assignment $\pi^*$ is optimal by showing that output losses are greater under any other assignment $\pi$. That is, $\int \ell(x_1, x_2, \dots, x_n, z) \text{d} \pi^\ast \leq \int \ell(x_1, x_2, \dots, x_m, z) \text{d} \pi$ or, equivalently, $\int_0^1 \exp( I_{\gamma^\ast}(p))\text{d}p \leq \int_0^1 \exp( I_{\gamma}(p))\text{d}p$.\footnote{It is equivalent to integrate output losses with respect to the assignment function and with respect to ranks in the logarithmic loss distribution, $\int \ell(x_1,x_2,\dots,x_m,z) \text{d} \pi = \int_0^1 \exp(I_{\gamma}(p) )\text{d}p$.}

Specifically, we establish optimality by proving that logarithmic output losses under the optimal assignment are smaller in convex order than the logarithmic output losses under any other feasible assignment.\footnote{A random variable $A$ is less in convex order than a random variable $B$ if and only if for any convex function $h$: $E[h(A)] \leq E[h(B)]$. This is equivalent to $\int_0^1 h(I_A(p))\text{d}p \leq \int_0^1 h(I_B(p))\text{d}p$ given that $I$ is the quantile function.} The proof is constructed using the integral version of the majorization inequality \citep{Hardy:1929,Karamata:1932,Pecaric:1984}. 

\vspace{0.35 cm}
\noindent \textit{Majorization Inequality}. Let $h$ be a continuous convex function, and let $s_1$ and $s_2$ be non-decreasing functions on the unit interval. Suppose
\begin{align}
1. \int_0^1s_1(p) \text{d} p & = \int_0^1 s_2(p) \text{d}p \label{e:karamata1} \\ 
2. \int_0^t \hspace{0.03 cm} s_1(p) \text{d} p & \geq \int_0^t \hspace{0.03 cm} s_2(p) \text{d} p , \label{e:karamata2}
\end{align}
for all $t \in [0, 1]$. Then $\int_0^1 h(s_1(p))\text{d}p \leq \int_0^1 h(s_2(p))\text{d}p$.
\vspace{0.35 cm}

\noindent We use the majorization inequality with $h = \exp$, $s_1(p) = I_{\gamma^*}(p)$ and $s_2(p) = I_{\gamma}(p)$. The exponential function is convex, and logarithmic losses $I_\gamma$ are increasing as the logarithm of a positive increasing function is itself increasing.

For application of the majorization inequality, it suffices to verify that (\ref{e:karamata1}) and (\ref{e:karamata2}) are satisfied. The first condition (\ref{e:karamata1}) requires that aggregate logarithmic losses under the assignments are equal. As aggregate logarithmic losses are invariant across feasible assignments this condition is verified. That is, for any feasible assignment $\pi$: $\int_0^1\log I_{\gamma}(p) \text{d} p = \int \log \ell(x_1,x_2,\dots,x_m,z) \text{d} \pi = \sum_{i=1}^m \int \log x_i \text{d} F_i + \int \log z \text{d} F_z$.

The second condition (\ref{e:karamata2}) is that cumulative logarithmic losses at any percentile in the team rank $t$ are larger under optimal assignment $\pi^*$. We observe that cumulative logarithmic losses are convex in team rank $t$ as they integrate the increasing negative function $\log I_{\gamma}$. For the mixed set, that is teams ranked $t \in [q,1]$, cumulative logarithmic losses decrease linearly as losses are constant. Since cumulative logarithmic losses for the lowest ranked teams $t \in [0,q)$ are maximized by the countermonotonic assignments (\Cref{lemma:S_gamma} in Appendix \ref{proof:support2}), and since aggregate logarithmic losses are equal, cumulative logarithmic losses are larger for any team rank $t \in [0,1]$. We formalize the verification of condition (\ref{e:karamata2}) in \Cref{proof:support2}.\footnote{\Cref{p:support2} does not imply uniqueness of an optimal assignment, only that any mixed and countermonotonic assignment solves the planning problem. We provide a further characterization of the mixed and countermonotonic assignment in \Cref{pf:branchbowl}, and prove when its exists in \Cref{a:existencemc}. In Section \ref{s:dualp}, we characterize equilibrium wages and firm values, which are uniquely determined.} 

\vspace{0.35 cm}
\noindent \textbf{General Production Functions}. We next discuss how the mixed and countermonotonic sorting is also an optimal assignment with more general production technologies.


Consider a general technology $y(x_{1},\dots,x_{m},z)$. Its second-order Taylor expansion is a second-order polynomial in worker skills $\{ x_{1},\dots,x_{m}\}$ and project value $z$. Assume that the matrix of the coefficients in this expansion has rank one and is elementwise non-positive to ensure submodularity. In \Cref{a:appprod}, we establish that a mixed and countermonotonic assignment is optimal. This shows that our results generalize in the second order sense to general production functions.

In the case of teams comprising of two workers and a project, we derive a complementary generalization. Let a technology function be given by $y(x_{1},x_{2},z,x_{1}x_{2},x_{2}z,x_{1}z,x_{1}x_{2}z)$ which additionally incorporates interactions effects between workers and coworkers $x_{1}x_{2}$, and workers and project $x_{1}z$ and $x_{2}z$. Suppose its first-order Taylor expansion around the function's arguments has negative coefficients on each interaction term $(x_{1}x_{2},x_{2}z,x_{1}z,x_{1}x_{2}z)$ to ensure submodularity. In \Cref{a:appprod}, through a change of variables, we establish optimality of a mixed and countermonotonic assignment. This shows that our results extend in the first order sense to these functions of single, pairwise, and triple interactions.

\subsection{Dual Problem} \label{s:dualp}

In this section, we develop the analysis of the dual problem to characterize wages and firm values.

\vspace{0.3cm}
\noindent \textbf{Marginal Worker Product}. The marginal worker product (\ref{mwp})
is the marginal output loss induced by a worker, which is the negative
product of the project value and the coworker skills. Given an optimal
assignment, the marginal worker product can be fully described.

First, consider a high-skill worker $x_{i}\in[0,I_{i}(\bar{p_{i}})]$.
Given team rank $t$, the high-skill worker is $x_{i}=I_{i}(\underline{t}_{i}(t))$,
and their team is in a countermonotonic set. This worker is assigned
to low-skill team members, given by $I_{j}(1-\bar{t}_{j}(t))$
on a valuable project $I_{n}(1-\bar{t}_{n}(t))$. The marginal product
for high-skill workers is $m_{i}(x_{i})=-\prod\limits _{j\neq i}I_{j}(1-\bar{t}_{j}(t))$.

Second, consider mediocre workers $x_{i}\in[I_{i}(\underline{p}\vphantom{p}_{i}),I_{i}(\bar{p}_{i})]$.
Mediocre workers work in mixed teams with identical output losses
$\ell(x_{1},\dots,x_{m},z)=x_{1}\dots x_{m}z=\mathcal{C}$
across all such teams. Given this constant output loss, the marginal
product for mediocre workers is $m_{i}(x_{i})=-\mathcal{C}/x_{i}$.

Finally, consider low-skill workers $x_{i}\in[I_{i}(\bar{p}_{i}),1]$.
The rationale for their marginal product is a direct consequence of
the assignment of high-skill workers described above. There are exactly
$m$ teams containing this low-skill worker with the same loss. Let $k$
be the index of the high-skill worker in one such team, so that $I_{k}(\underline{t}_{k}(t))$
is the high-skill worker. The value of team member $j$  is given
by $I_{j}(1-\bar{t}_{j}(t))$ for all $j\neq k$. Thus, the marginal
product of the worker is: $-I_{k}(\underline{t}_{k}(t))\prod\limits _{j\ne i,k}I_{j}(1-\bar{t}_{j}(t))$.
The description of the marginal product is summarized by Proposition
\ref{prop:marginal}.


\vspace{0.15cm}
\begin{proposition}{\textit{Marginal Product}}. \label{prop:marginal}
The marginal product of worker $x_{i} = I_i(p_i)$ is: 
\begin{align}
m_{i}(x_{i})=\begin{cases}
\hspace{0.121cm}-\;\prod\limits _{j\neq i}I_{j}(1-\bar{t}_{j}(t))\hspace{4.31cm}\text{if }x_{i}\in[0,I_{i}(\bar{p}_{i})]\\
\vspace{0.15cm}\hspace{0.12cm}-\;\mathcal{C}\big/x_{i}\hspace{6.29cm}\text{if }x_{i}\in[I_{i}(\underline{p}_{i}),I_{i}(\bar{p}_{i})]\\
\vspace{0.10cm}\hspace{0.12cm}-\;I_{k}(\underline{t}_{k}(t))\prod\limits _{j\neq i,k}I_{j}(1-\bar{t}_{j}(t))\hspace{2.54cm}\text{if }x_{i}\in[I_{i}(\bar{p}_{i}),1]\label{e:m}
\end{cases}
\end{align}
where the constant $\mathcal{C}$ is given by (\ref{e:bowlloss}).
\end{proposition}

Proposition \ref{prop:marginal_product} establishes properties of
the marginal worker product.

\begin{proposition}{\textit{Properties of the Marginal Product}}.
\label{prop:marginal_product} The marginal product $\{m_{i}\}$ is
continuous and increasing. \end{proposition}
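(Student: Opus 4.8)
The plan is to verify continuity and strict monotonicity of $m$ piece by piece on the three intervals $[0,\underline p]$, $(\underline p,\bar p)$, $[\bar p,1]$ that appear in (\ref{e:m}), and then check that the pieces glue together continuously at the two junction points $p=\underline p$ and $p=\bar p$. Since $F$ is strictly increasing and continuous, $I$ is continuous and strictly increasing, so each of the three expressions $-I^2(1-2p)$, $-\mathcal{C}/I(p)$, and $-I(p)I\bigl(\tfrac{1-p}{2}\bigr)$ is continuous on its interval; the only delicate point is whether $\mathcal{C}>0$, which follows since $I(p)>0$ for $p>0$ (and the relevant percentiles $\underline p,\bar p$ are bounded away from $0$ by (\ref{e:l}) and $\underline p\in(0,\tfrac13]$). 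For continuity at the junctions I would plug in: at $p=\underline p$, the middle branch gives $-\mathcal{C}/I(\underline p)$ and the first branch gives $-I^2(1-2\underline p)=-I^2(\bar p)$; these agree precisely because $\mathcal{C}=I^2(\bar p)I(\underline p)$ by (\ref{e:l}). At $p=\bar p$, the middle branch gives $-\mathcal{C}/I(\bar p)=-I(\bar p)I(\underline p)$ using (\ref{e:l}) again, while the third branch gives $-I(\bar p)I\bigl(\tfrac{1-\bar p}{2}\bigr)=-I(\bar p)I(\underline p)$ since $\tfrac{1-\bar p}{2}=\underline p$ by $\bar p=1-2\underline p$. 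So continuity is essentially bookkeeping with the two defining relations.

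For strict monotonicity on the interior of each interval: on $(\underline p,\bar p)$ the map $p\mapsto -\mathcal{C}/I(p)$ is strictly increasing because $I$ is strictly increasing and $\mathcal{C}>0$, so no extra assumption is needed there. On $[\bar p,1]$ I would compute (or argue qualitatively) that $p\mapsto I(p)I\bigl(\tfrac{1-p}{2}\bigr)$ is strictly decreasing: as $p$ increases, $I(p)$ increases but $\tfrac{1-p}{2}$ decreases so $I\bigl(\tfrac{1-p}{2}\bigr)$ decreases, and the product of these two competing monotone terms need not be monotone in general — this is where part (b) of Assumption \ref{assumption2} (concavity of $L(p)=2\log I(1-2p)+\log I(p)$ on $[0,\tfrac13]$) enters. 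Writing the third branch in the reparametrization used for $M_z$, namely at the low-value firm percentile $q=\tfrac{1-p}{2}\in[0,\underline p)$ the loss magnitude is $I^2(1-2q)I(q)=\exp(L(q))$, concavity of $L$ combined with $L$ attaining its endpoint behavior forces $\exp(L(q))$ to be monotone in $q$ over the relevant range, hence $m$ is strictly increasing in $p$ on $[\bar p,1]$. Similarly on $[0,\underline p]$, $m(p)=-I^2(1-2p)=-\exp\bigl(2\log I(1-2p)\bigr)$ is strictly increasing since $1-2p$ decreases and $I$ increases; I'd note $2\log I(1-2p)$ is itself part of $L$, and the monotonicity is immediate without needing concavity here, only on the overlap regions.

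The main obstacle will be the monotonicity argument on the low-skill branch $[\bar p,1]$: unlike the other two branches, $m$ there is a product of two functions moving in opposite directions, so I cannot conclude monotonicity from elementary facts about $I$ alone, and I have to invoke Assumption \ref{assumption2}(b) carefully — specifically I need that $L(q)=2\log I(1-2q)+\log I(q)$, which is concave on $[0,\tfrac13]$, is also \emph{monotone decreasing} (not merely concave) on $[0,\underline p)$, since concavity plus the value $\mathcal{C}=\exp(L(\underline p))$ being an interior-type extremum of the loss structure pins down the sign of $L'$. I expect the cleanest route is: $L$ concave on $[0,\tfrac13]\supset[0,\underline p]$ means $L'$ is nonincreasing; the definition of $\underline p$ via (\ref{e:l}) and the construction of the completely mixed set (where losses equal $\mathcal{C}=\exp(L(\underline p))$ and losses on the pairwise-countermonotonic tail must not fall below $\mathcal{C}$, as argued just before Proposition \ref{prop:marginal}) forces $L(q)\le L(\underline p)$ for $q<\underline p$, and together with concavity this yields $L$ increasing on $[0,\underline p]$, i.e. $q\mapsto\exp(L(q))$ increasing, i.e. $p=1-2q\mapsto -\exp(L(q))=m(p)$ increasing. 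Assembling the three strictly increasing pieces with matching endpoint values gives that $m$ is continuous and strictly increasing on $[0,1]$, completing the proof.
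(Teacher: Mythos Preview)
Your plan for continuity and for monotonicity on $[0,\underline p]$ and $(\underline p,\bar p)$ is correct and matches the paper exactly: both cutoff checks reduce to the identity $\mathcal C=I^2(\bar p)I(\underline p)$ together with $\bar p=1-2\underline p$, and on the first two branches strict monotonicity of $I$ alone suffices.

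The gap is on the low-skill branch $[\bar p,1]$. With $q=\tfrac{1-p}{2}$ you write $m(p)=-\exp(L(q))$, but this is not $m$: the loss is $I^2(1-2q)I(q)=\exp(L(q))$ while the marginal product magnitude is $-m(p)=I(p)I\bigl(\tfrac{1-p}{2}\bigr)=I(1-2q)I(q)$, which differs by a factor $I(1-2q)$. The paper handles this by writing $\log(-m)=\log I(1-2q)+\log I(q)=L(q)-\log I(1-2q)$, and then observes that \emph{both} summands are increasing in $q$ (equivalently decreasing in $p$): the second is immediate from monotonicity of $I$, and the first requires that $L$ is increasing on $[0,\underline p]$. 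So monotonicity of $\exp(L(q))$ alone is not quite enough, though the missing piece is elementary once noticed.

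Your justification for $L$ increasing on $[0,\underline p]$ also needs repair. First, the direction is inverted: on $M_z$ the loss $\exp(L(q))$ \emph{does} fall below $\mathcal C$ as $q\downarrow 0$ (indeed $L(0)=-\infty$), so the statement ``must not fall below $\mathcal C$'' is wrong, even though your intended conclusion $L(q)\le L(\underline p)$ is the correct one. Second, appealing to the discussion just before Proposition~\ref{prop:marginal} is circular, since that passage already uses that $m$ is increasing. The clean self-contained argument, which is what the paper does in Lemma~\ref{lemma:S_gamma}, is to rewrite the defining relation (\ref{e:l}) as
\[
L(\underline p)=\frac{1}{\tfrac{1}{3}-\underline p}\int_{\underline p}^{1/3}L(p)\,\mathrm{d}p,
\]
i.e.\ $L(\underline p)$ equals the average of $L$ over $[\underline p,\tfrac13]$. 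If $L$ were nonincreasing at $\underline p$, concavity (Assumption~\ref{assumption2}(b)) would force $L$ to be strictly decreasing on $[\underline p,\tfrac13]$, making the average strictly below $L(\underline p)$, a contradiction. Hence $L'(\underline p)>0$, and by concavity $L'\ge L'(\underline p)>0$ on all of $[0,\underline p]$. Plugging this into $\log(-m)=L(q)-\log I(1-2q)$ completes the third branch, and assembling the three pieces with your endpoint checks finishes the proof.
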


\noindent We now give an outline of the proof to this proposition. First, continuity
within all three segments directly follows as the distribution functions
$\{F_{i}\}$, their inverse functions $\{I_{i}\}$, and $\{\underline{t}_{i},\bar{t}_{i}\}$
are all continuous. In addition, at the point $x_{i}=I_{i}(\underline{p}\vphantom{p}_{i})$
the values of the high-skill and mediocre workers align; at the point
$x_{i}=I_{i}(\bar{p}_{i})$ the values of the low-skill workers and
the mediocre workers align. Hence, the marginal product $\{m_{i}\}$
is continuous across all skill levels.\footnote{The marginal worker product $m_i$ is identical for all worker distributions, or $m_i = m$ as we show in \Cref{a:existencemc}.}


Second, the most insightful part of this proposition is that the marginal
product is increasing. This fact follows from the necessary condition
that total losses do not decrease by exchanging workers between
their teams. For any two teams $(x_{1},\dots,x_{m},z)$ and
$(\hat{x}_{1},\dots,\hat{x}_{m},\hat{z})$ in the optimal assignment $\ell(x_{1},\dots,x_{m},z)+\ell(\hat{x}_{1},\dots,\hat{x}_{m},\hat{z})\leq\ell(x_{1},\dots,\hat{x}_{i},\dots,x_{m},z)+\ell(\hat{x}_{1},\dots,x_{i},\dots,\hat{x}_{m},\hat{z})$.
Given the loss function $\ell(x_{1},\dots,x_{m},z)=x_{1}\dots x_{m}z$ and the marginal
worker product (\ref{mwp}): 
\begin{align}
\big(x_{i}-\hat{x}_{i}\big)\big(z\prod_{j\neq i}x_{j}-\hat{z}\prod_{j\neq i}\hat{x}_{j}\big) & \leq0\hspace{1.2cm}\Longleftrightarrow\hspace{1.2cm}(x_{i}-\hat{x}_{i})(m_{i}(\hat{x}_{i})-m_{i}(x_{i}))\leq 0 . \label{e:stability_explicit}
\end{align}
When a worker is more skilled, $\hat{x}_{i}<x_{i}$, (\ref{e:stability_explicit})
implies that they work with coworkers on a project that has a greater
output loss, $\hat{z}\prod\limits _{j\neq i}\hat{x}_{j}\geq z\prod\limits _{j\neq i}x_{j}$.
Equivalently, when a worker is more skilled, their marginal product
(\ref{mwp}) is more negative, and the marginal product is thus increasing.

\vspace{0.35cm}
\noindent \textbf{Wages}. Given the marginal product we formulate equilibrium
wages. To simplify notation, define the surplus $S$ as output
minus payments to workers and firms: 
\begin{equation}
S(x_{1},\dots,x_{m},z)=y(x_{1},\dots,x_{m},z)-\sum_{i=1}^{m}w(x_{i})-v(z).\label{e:surplus_equation}
\end{equation}
The constraint to the dual problem is that the surplus $S(x_{1},\dots,x_{m},z)$
is negative for any team $(x_{1},\dots,x_{m},z)\in X\times\dots\times X\times Z$.
In Appendix \ref{s:duality_kellerer} we show that the dual solution
exists in the class of continuous functions, hence the surplus function
(\ref{e:surplus_equation}) is continuous also.

Using the duality result of \citet{Kellerer:1984}, the optimal value
to the planner problem equals the optimal value for the dual problem,
$\int y(x_{1},\dots,x_{m},z)\text{d}\pi=\sum\int w(x_{i})\text{d}F_{x}+\int v(z)\text{d}F_{z}$,
where $\pi$ solves the planning problem and $w$ and $v$ solve the
dual problem. Given feasibility and the definition of the surplus,
$\int S(x_{1},\dots,x_{m},z)\text{d}\pi=0$. Since the surplus is
non-positive for every team by the constraint to the dual problem
(\ref{e:pp_dual}), the surplus is zero almost everywhere
with respect to the assignment $\pi$.

We next characterize the derivative of the wage schedule and the
firm value function.

\vspace{0.1cm}
 \begin{proposition}{\textit{Marginal Wages and Firm Values}}.\label{prop:cont_derivative}
Wages and firm values are both continuously differentiable, with derivatives
$w'(x)=m(x)$ and $v'(z)=m_z(z)+1$. \end{proposition}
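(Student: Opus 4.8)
The plan is to read the two derivatives off the envelope (first‑order) conditions implied by the zero‑surplus property on the support of an optimal assignment. By the discussion preceding the statement — which rests on the Kellerer‑type duality of Appendix~\ref{s:duality_kellerer} — we may fix an optimal $\pi$ with $\mathrm{supp}(\pi)\subseteq M$ such that $w,v$ are continuous, $S\le 0$ on $X\times X\times Z$, and $S(x_1,x_2,z)=0$ for every $(x_1,x_2,z)\in\mathrm{supp}(\pi)$. Since the marginals $F_x,F_z$ have full support $[0,1]$, the compact set $\mathrm{supp}(\pi)$ projects onto all of $X$ in its first coordinate and onto all of $Z$ in its third (a closed set of full $F_x$‑ or $F_z$‑measure in $[0,1]$ must be all of $[0,1]$), so every worker skill $x$ and every firm type $z$ appears in some triplet of the support.

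First I would derive, for each $x$, a supporting inequality for $w$. Take any triplet $(x,x_2,z)\in\mathrm{supp}(\pi)$; subtracting $S(\xi,x_2,z)\le 0$ from $S(x,x_2,z)=0$ and using $y(\xi,x_2,z)-y(x,x_2,z)=\int_x^\xi y_1(t,x_2,z)\,\mathrm dt=-x_2z\,(\xi-x)$ gives $w(\xi)-w(x)\ge -x_2z\,(\xi-x)$ for all $\xi\in X$. By Proposition~\ref{prop:marginal} the slope $-x_2z$ is single‑valued in $x$ — equal to $m(x)$ — no matter which coworker–firm pair is matched with $x$ (on the completely mixed piece $x x_2 z=\mathcal C$ pins $x_2 z=\mathcal C/x$; on the pairwise countermonotonic pieces the triplet is pinned by $x$ alone; the formulas agree at the breakpoints by~(\ref{e:l})), so $w(\xi)-w(x)\ge m(x)(\xi-x)$. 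Running the same argument at a neighbouring point $x'$ and evaluating at $\xi=x$ yields $w(x)-w(x')\ge m(x')(x-x')$, hence $m(x)(x'-x)\le w(x')-w(x)\le m(x')(x'-x)$. Dividing by $x'-x$ (reversing the inequalities when $x'<x$) and letting $x'\to x$, the continuity and strict monotonicity of $m$ from Proposition~\ref{prop:marginal_product} squeeze both one‑sided difference quotients to $m(x)$; thus $w$ is differentiable with $w'(x)=m(x)$, and continuity of $m$ upgrades this to $w\in C^1$, one‑sided at the endpoints of $[0,1]$.

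The firm value is handled identically in the third coordinate: for $(x_1,x_2,z)\in\mathrm{supp}(\pi)$ one gets $v(\zeta)-v(z)\ge\int_z^\zeta y_3(x_1,x_2,t)\,\mathrm dt=(1-x_1x_2)(\zeta-z)$. Here I would check, using the description of $M$ in~(\ref{e:matching_set}), that $-x_1x_2$ equals $m(z)$ at every firm type $z$: on $M_{\mathcal C}$, $x_1x_2z=\mathcal C$ gives $-x_1x_2=-\mathcal C/z=m(z)$; on $M_z$ the triplet is $(1-2p,1-2p,p)$ so $-x_1x_2=-I^2(1-2p)=m(z)$; and on $M_{x_1}\cup M_{x_2}$ the firm type is $I(1-2p_0)$ with $-x_1x_2=-I(p_0)I(1-2p_0)=m\big(I(1-2p_0)\big)$; the three expressions agree at the breakpoints $p_z\in\{\underline p,\bar p\}$ precisely by~(\ref{e:l}). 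Consequently $v(\zeta)-v(z)\ge(m(z)+1)(\zeta-z)$, the mirror inequality holds at a neighbouring $z'$, and the same two‑sided squeeze gives $v'(z)=m(z)+1$ with $v\in C^1$.

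I expect the main obstacle to be not any single estimate but assembling the pieces correctly. One must use the \emph{pointwise} (not merely $\pi$‑almost‑everywhere) zero‑surplus property, which is exactly why the continuity of $w,v,S$ furnished by the Kellerer‑type duality of Appendix~\ref{s:duality_kellerer} is needed, together with the fact — supplied by Propositions~\ref{prop:marginal} and~\ref{prop:marginal_product} — that the support slopes $-x_2z$ at $x$ (resp. $-x_1x_2$ at $z$) are single‑valued and vary continuously in $x$ (resp. $z$). Without this, the envelope inequality only bounds $w$ and $v$ from below by smooth tangents and does not by itself deliver differentiability; with it, the squeeze argument is routine.
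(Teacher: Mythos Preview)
Your argument is correct and matches the paper's proof essentially line for line: derive the supporting inequality $w(\xi)-w(x)\ge m(x)(\xi-x)$ from $S\le 0$ and $S=0$ on the support, swap the roles of $x$ and $x'$ to get the reverse bound, then squeeze using the continuity of $m$ from Proposition~\ref{prop:marginal_product}; the firm side is identical with slope $m(z)+1$. The additional care you take --- verifying that every $x$ and $z$ actually occurs in $\mathrm{supp}(\pi)$, and that the slope $-x_2z$ (resp.\ $-x_1x_2$) is single-valued across the different pieces of $M$ --- is implicit in the paper but makes the argument more self-contained.
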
 \vspace{0.1cm}

\noindent The proof is presented in Appendix \ref{proof:cont_derivative}.
Workers' marginal earnings equal their marginal product. The marginal
increase in output must equal the marginal increase in the cost necessary
to recruit a higher skill worker. While the marginal product depends only on peer effects, the equilibrium level of these peer
effects is directly related to the worker's own skill through Proposition
\ref{prop:marginal}.

While a worker's marginal earnings reflect their marginal product,
earnings levels reflect the marginal product of all workers that are
more skilled. Specifically, workers incur an earnings penalty relative
to the most skilled worker. Wages are characterized in Proposition
\ref{prop:wages}.


\vspace{0.1cm}
\begin{proposition}{\textit{Wages}}.\label{prop:wages} The wage
schedule is, up to additive worker constant $\mathcal{C}_{w}$, given
by: 
\begin{equation}
w(x)=\mathcal{C}_{w}+\int_{0}^{x}m(s)\text{d}s.\label{e:wage_equation}
\end{equation}
\end{proposition}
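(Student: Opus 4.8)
The plan is to obtain (\ref{e:wage_equation}) as an immediate consequence of the differential characterization already in hand. Proposition \ref{prop:cont_derivative} asserts that the equilibrium wage schedule $w$ is continuously differentiable with $w'(x) = m(x)$, and Proposition \ref{prop:marginal_product} asserts that $m$ is continuous on $[0,1]$. Hence $m$ is integrable, and the fundamental theorem of calculus gives $w(x) - w(0) = \int_0^x m(s)\,\text{d}s$ for every $x \in X$. Setting $\mathcal{C}_w := w(0)$ yields the claimed representation. The one point worth a sentence is that the integrand is the piecewise expression (\ref{e:m}), with kinks at the percentile cutoffs $\underline{p}$ and $\bar{p}$; integrating across these cutoffs is legitimate precisely because Proposition \ref{prop:marginal_product} guarantees $m$ is continuous there, so no boundary terms appear.

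It then remains to explain why the additive constant $\mathcal{C}_w$ is left free. The dual objective (\ref{e:pp_dual}) is invariant under the shift $(w,v) \mapsto (w + c,\, v - 2c)$ for any constant $c$: since $w$ enters twice (once for workers, once for coworkers) and the marginals $F_x$, $F_x$, $F_z$ each have unit mass, the objective changes by $c + c - 2c = 0$, while the dual constraint $w(x_1) + w(x_2) + v(z) \geq y(x_1,x_2,z)$ is preserved. Dual solutions therefore form a one-parameter family; the requirement from Lemma \ref{l:duality} (equivalently, the Kellerer duality argument invoked above) that the surplus $S$ vanish on the support $M$ pins down the constant in $v$ in terms of $\mathcal{C}_w$, leaving $\mathcal{C}_w$ as the only remaining degree of freedom. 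One can report a convenient normalization --- for instance fixing the zero-surplus condition at the reference triplet $(I(\bar{p}),I(\bar{p}),I(\underline{p}))$ --- but no normalization is needed for the statement as written.

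Because the substantive content is already established --- existence of a continuous dual solution via \citet{Kellerer:1984}, the identity $w' = m$ (Proposition \ref{prop:cont_derivative}), the closed form of $m$ (Proposition \ref{prop:marginal}), and its continuity and monotonicity (Proposition \ref{prop:marginal_product}) --- I do not anticipate a genuine obstacle. The only care required is bookkeeping: checking that the $w$ obtained by integrating $m$ is consistent with the firm value function $v$ obtained by integrating $m+1$, in the sense that $(w,v)$ satisfies the global dual constraint $S(x_1,x_2,z) \leq 0$ on all of $X \times X \times Z$ and not merely on $M$. That global inequality follows from the duality theorem --- a dual optimizer exists and, by Proposition \ref{prop:cont_derivative}, has the stated derivatives --- so it need not be re-derived here, but it is where the earlier structural results do the real work behind this corollary.
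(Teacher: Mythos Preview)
Your proposal is correct and matches the paper's own argument: the paper derives $w'(x_1)=m(x_1)$ from the first-order condition $S_1=0$ on the support (which is exactly the content of Proposition~\ref{prop:cont_derivative}) and then states that ``by integrating, equilibrium wages follow (\ref{e:wage_equation}).'' Your version is, if anything, slightly cleaner in that you invoke Proposition~\ref{prop:cont_derivative} directly rather than re-deriving it, and your discussion of the one-parameter freedom $(w,v)\mapsto(w+c,\,v-2c)$ is a welcome explicit justification for leaving $\mathcal{C}_w$ undetermined.
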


\vspace{0.1cm}
\noindent To obtain intuition, observe the surplus is zero almost everywhere
with respect to the assignment $\pi$, $S(x_{1},\dots,x_{m},z)=z\big(1-\prod x_{i}\big)-\sum w(x_{i})-v(z)=0$.
Moreover, since $w$ is a differentiable function, fix the coworkers
$x_{\neg i}$ and the project $z$. Since the surplus function
is negative $S(x_{1},\dots,x_{m},z)\leq0$, the equilibrium team $(x_{1},\dots,x_{m},z)$
is a local maximum with respect to the surplus function, and hence:
\begin{equation}
S_{i}(x_{1},\dots,x_{m},z)=0=-z\prod\limits _{j\neq i}x_{j}-w'(x_{i})\hspace{0.7cm}\implies\hspace{0.7cm}w'(x_{i})=m(x_{i}),
\end{equation}
where the final equality follows from the definition of the marginal
worker product (\ref{mwp}). Since there are no excess resources and
the surplus is differentiable, every worker's marginal earnings is
their marginal product. By integrating, equilibrium wages follow (\ref{e:wage_equation}).

\noindent 

The marginal cost of workers is equal to their marginal product $w'(x)=m(x)$
for all $x\in(0,1)$. By Proposition \ref{prop:marginal_product},
the negative marginal worker product is continuous and increasing,
establishing that equilibrium wages are decreasing and convex.


\vspace{0.3cm}
\noindent \textbf{Firm}. We next characterize firm values. The intuition for
the firm value is similar to the intuition for wages. If there are
no excess resources almost everywhere for assignment $\pi$, then,
by fixing a team of workers as well as by differentiability
of the firm value $v$: 
\begin{equation}
S_{n}(x_{1},\dots,x_{m},z)=0=1-\prod\limits _{j}x_{j}-v'(z)\hspace{0.7cm}\implies\hspace{0.7cm}v'(z)=m_{z}(z)+1,
\end{equation}
where the final equality follows from the definition of the marginal
product (\ref{mwp}). A firm's marginal reward is its marginal product.
When a firm undertakes a marginally more valuable project, its output
is its expected probability of success, or team quality, $1-\prod\limits _{j}x_{j}$.
In equilibrium, the firm's marginal product is similar to the worker
product, up to the unit constant. Firm values are characterized in
Proposition \ref{c:firm_value}.

\begin{proposition}{\textit{Firm Value}}. \label{c:firm_value}
The firm value $v$ is, up to additive firm constant $\mathcal{C}_{v}$,
given by: 
\begin{equation}
v(z)=\mathcal{C}_{v}+\int_{0}^{z}k(s)\text{d}s,\label{e:firm_value}
\end{equation}
where $k(s)=m_{z}(s)+1$ is the marginal firm product. \end{proposition}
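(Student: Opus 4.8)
The plan is to read off the representation of $v$ as an immediate consequence of the derivative formula already established in Proposition~\ref{prop:cont_derivative}, via the fundamental theorem of calculus, in exact parallel to the wage argument of Proposition~\ref{prop:wages}. First I would recall the chain behind Proposition~\ref{prop:cont_derivative}: by the Kellerer-type duality in the class of continuous functions (Appendix~\ref{s:duality_kellerer}), the surplus $S$ in (\ref{e:surplus_equation}) is continuous and satisfies $\int S \, \text{d}\pi = 0$; since $S \leq 0$ everywhere by the dual constraint, this forces $S(x_1,x_2,z) = 0$ for every triplet in $\text{supp}(\pi)$. Fixing such a triplet and an admissible worker pair $(x_1,x_2)$, interiority of the maximizer in the $z$-coordinate gives the first-order condition $S_3(x_1,x_2,z) = (1 - x_1 x_2) - v'(z) = 0$, hence $v'(z) = 1 - x_1 x_2$; by the definition of the marginal product (\ref{mwp}), together with the symmetry of the loss $\ell$ and the hypothesis that the worker and project-value distributions coincide, $1 - x_1 x_2 = m(z) + 1 = k(z)$. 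This is precisely the statement of Proposition~\ref{prop:cont_derivative}, which I would simply invoke.

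Given $v'(z) = k(z)$ on $Z = [0,1]$, the only remaining step is integration. By Proposition~\ref{prop:marginal_product} the marginal product $m$ is continuous on $[0,1]$, and by Proposition~\ref{prop:marginal} it is bounded (each of its three branches is assembled from the inverse distribution $I$, which maps into $[0,1]$, and from the constant $\mathcal{C}$); hence $k = m + 1$ is continuous and bounded, and thus Riemann integrable on $[0,z]$ for every $z \in [0,1]$. Since Proposition~\ref{prop:cont_derivative} also asserts that $v$ is continuously differentiable, the fundamental theorem of calculus yields $v(z) = v(0) + \int_0^z v'(s) \, \text{d}s = v(0) + \int_0^z k(s) \, \text{d}s$. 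Setting $\mathcal{C}_v := v(0)$ gives (\ref{e:firm_value}); the constant is left undetermined at this stage because it is only the economy-wide resource-feasibility identity, which couples $\mathcal{C}_v$ with the worker constant $\mathcal{C}_w$, that pins it down.

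I do not expect a genuine obstacle: the analytic and economic content—existence of a continuous dual solution, differentiability of $v$, and the identification $v'(z) = m(z) + 1$—is carried out in Proposition~\ref{prop:cont_derivative} and the propositions feeding it, so what is left here is a routine antiderivative computation. The one point worth a sentence of care is the behaviour at the endpoints $z = 0$ and $z = 1$, where $v'$ is understood as a one-sided derivative; continuous differentiability on the closed interval in that one-sided sense is exactly what Proposition~\ref{prop:cont_derivative} supplies, so the fundamental theorem of calculus applies on $[0,z]$ without modification.
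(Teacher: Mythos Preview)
Your proposal is correct and follows essentially the same route as the paper: invoke Proposition~\ref{prop:cont_derivative} for $v'(z)=m(z)+1=k(z)$, then integrate using the continuity of $m$ from Proposition~\ref{prop:marginal_product} to obtain $v(z)=v(0)+\int_0^z k(s)\,\text{d}s$ with $\mathcal{C}_v=v(0)$. The paper's own argument is the one-line ``By integrating, firm values follow Proposition~\ref{c:firm_value},'' so your added care about endpoint behaviour and integrability is more detail than the paper provides, not a different method.
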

\vspace{0.1cm}

\noindent By Proposition \ref{c:firm_value}, the firm value shares
properties with the wage function. By Proposition \ref{prop:marginal_product},
the marginal firm product $k$ is continuous and increasing. The marginal
worker product $m$ is continuous and increasing, and $m(z)\geq-1$
implying $v'(z)\geq0$. The firm value function is thus increasing
and convex in project value $z$.

Given the description of wages in Proposition \ref{prop:wages} and
the firm value function in Proposition \ref{c:firm_value}, we observe
that for any constants $\mathcal{C}_{w}$ and $\mathcal{C}_{v}$ that
satisfy: 
\begin{equation}
0=\mathcal{C}_{w}+\mathcal{C}_{v}\big/m+\int_{0}^{1}m_{x}(s)\text{d}s\label{e:constants}
\end{equation}
the functions $(w,v)$ are a dual solution. Equation (\ref{e:constants})
ensures the surplus is zero for the team $(x_{1},\dots,x_{m},z)=(1,\dots,1,0)$.

\vspace{0.35 cm}
\noindent \textbf{Summary}. In \Cref{s:planningproblem} we show that the mixed and countermonotonic assignment
solves the assignment problem. In \Cref{s:dualp} we construct the solution to the corresponding dual problem. Since the surplus is zero everywhere in the support of an optimal assignment, the value for the planning problem and the dual
problem coincide. In \Cref{a:eqdefn} we show that any mixed and countermonotonic assignment $\pi$
 together with dual functions $(w,v)$ satisfying (\ref{e:constants})
are therefore an equilibrium.\footnote{Equation (\ref{e:constants}) determines the equilibrium in terms
of wages and firm values up to a constant. When functions $w$ and
$v$ solve the dual problem, so do functions $\hat{w}$ and $\hat{v}$
which differ from $w$ and $v$ only in terms of the constants as
long as they satisfy (\ref{e:constants}).}


\section{Quantitative Analysis}

We evaluate the ability of the model to quantitatively generate the observed wage dispersion, as well as its decomposition into within and between-firm components. 


\subsection{Data} \label{s:data}


We use data on worker earnings for 1981 and 2013 from the US Social Security Administration of \citet{Song:2019}. The data considers employed individuals between 20 and 60 years of age.\footnote{Individual earnings are derived from W-2 forms, which capture compensation for labor services as defined by the Internal Revenue Service. An individual is considered employed when their earnings exceed the minimum wage earned full-time for one quarter, that is, for 520 hours. While the baseline sample of \citet{Song:2019} considers firms with over 20 employees, we confirm our results extend to firms of all sizes. Workers with multiple jobs in a year are linked to the firm that provides their largest share of labor earnings. All amounts are in 2013 dollars.} The main reason for using these records is that the data matches the universe of individuals to their employer which allows us to analyze sorting patterns. Importantly, the matched employer-employee dataset provides information on the earnings distribution within every firm. 

\begin{figure}[t!] 
\begin{center} 
\subfigure{\includegraphics[trim=0.0cm 0.0cm 0.0cm 0.0cm, width=0.48\textwidth,height=0.27\textheight]{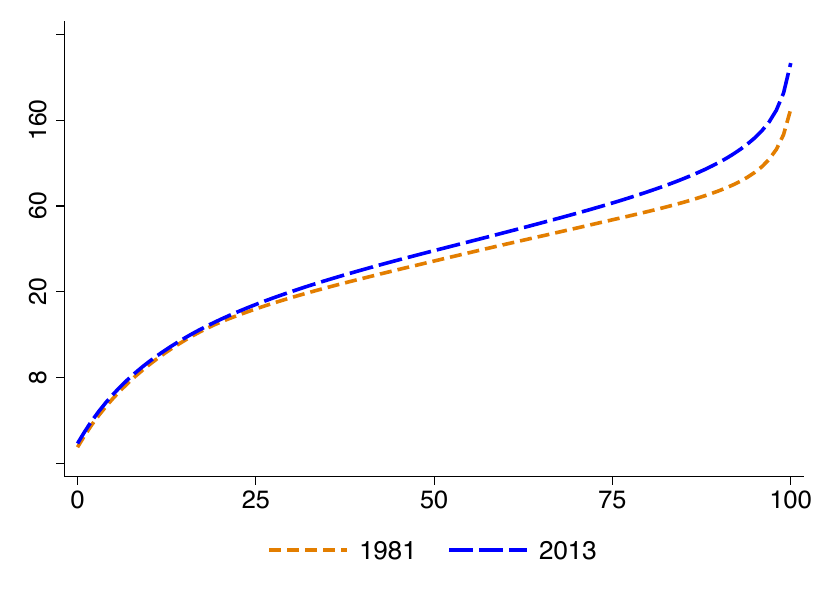}} 
\end{center}\vspace{-1 cm}
\caption{Distribution of Individual Earnings} \label{f:annual_earnings}
{\scriptsize \vspace{.2 cm} Figure \ref{f:annual_earnings} displays the distribution of annual earnings in the SSA, replicating Figure I in \citet{Song:2019}. The figure shows individual earnings levels, on the $y$-axis in thousands of dollars, for each percentile of the income distribution, on the $x$-axis.}
\end{figure}

Figure \ref{f:annual_earnings} shows individual earnings by percentile of the earnings distribution for 1981 and 2013. In 1981, earnings vary between 18 thousand at the 25th percentile, to 31 thousand at the median, to 89 thousand at the 95th percentile. By 2013, earnings are 19 thousand at the 25th percentile, 35 thousand at the median, and 132 thousand at the 95th percentile. Figure \ref{f:annual_earnings} shows that the earnings distribution features considerable dispersion in levels, which has increased over time. Between 1981 and 2013, earnings in the bottom third of the earnings distribution have seen little change, while earnings at the top strongly increased. 

We decompose the overall variance of log earnings into within- and between-firm components. Let $w_{ij}$ be log earnings of worker $i$ at firm $j$. Earnings can be written as $w_{ij} = \bar{w}_{j} + ( w_{ij}  - \bar{w}_{j} )$, where $\bar{w}_j$ is average log earnings within firm $j$. The total variance is then:
\begin{equation}
\text{Var}(w_{ij}) = \text{Var}(\bar{w}_{j}) + \sum_j \theta_j \text{Var}(w_{ij} | i \in j) \label{e:var_decomposition} \hspace{0.07 cm} ,
\end{equation}
where $\theta_j$ is the employment share of firm $j$. The first term is the variance of mean earnings across firms, or the between-firm variance. The second term is the average of within-firm dispersion of employee earnings weighted by employment. 

The variance of log earnings in 1981, 0.65 log points, is for one-third attributed to differences in mean earnings between firms and for two-thirds to within-firm earnings dispersion as documented by \citet{Song:2019}. From 1981 to 2013 the variance of log earnings increased by 0.20 log points to 0.85. Two-thirds of this increase is attributed to increased differences in mean earnings across firms, while a third is attributed to an increase in within-firm earnings dispersion. 

\subsection{Model}

We assess the ability of our model to generate dispersion in earnings, its decomposition between and within firms, as well as changes in the distribution of earnings over time. We then use the calibrated model to structurally decompose increased earnings dispersion between changes on the worker side and changes on the firm side of the labor market. 

In our model, the worker skill distribution $F_x$ and the project value distribution $F_z$ are both exogenous. We parameterize the distribution for workers as $\text{Beta}(\alpha_x,\beta_x)$ and the distribution for projects as $\text{Beta}(\alpha_z,\beta_z)$. We consider teams with two workers.


\begin{figure}[t!]
\begin{center}
\subfigure[{\footnotesize 1981}]{
\includegraphics[trim=0.0cm 0.0cm 0.0cm 0.0cm, width=0.48\textwidth,height=0.27\textheight]{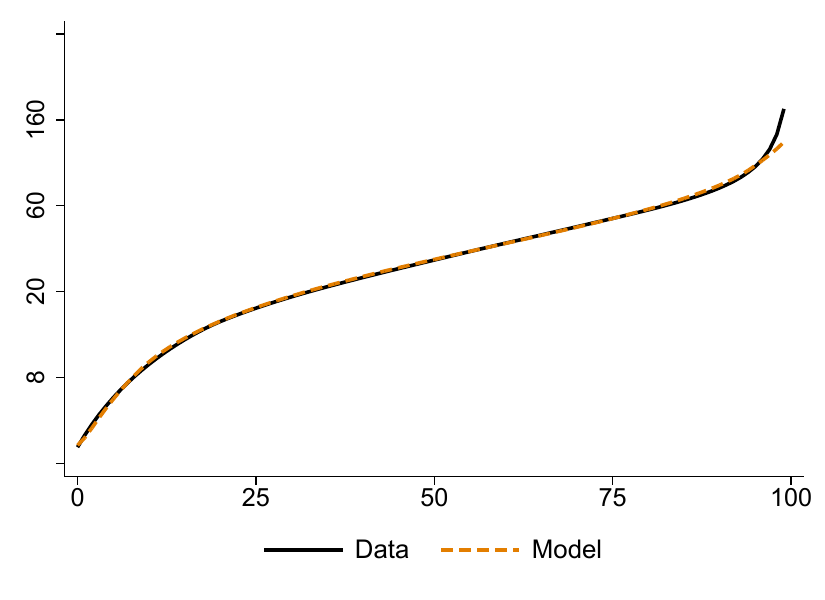}}
\subfigure[{\footnotesize 2013}]{
\includegraphics[trim=0.0cm 0.0cm 0.0cm 0.0cm, width=0.48\textwidth,height=0.27\textheight]{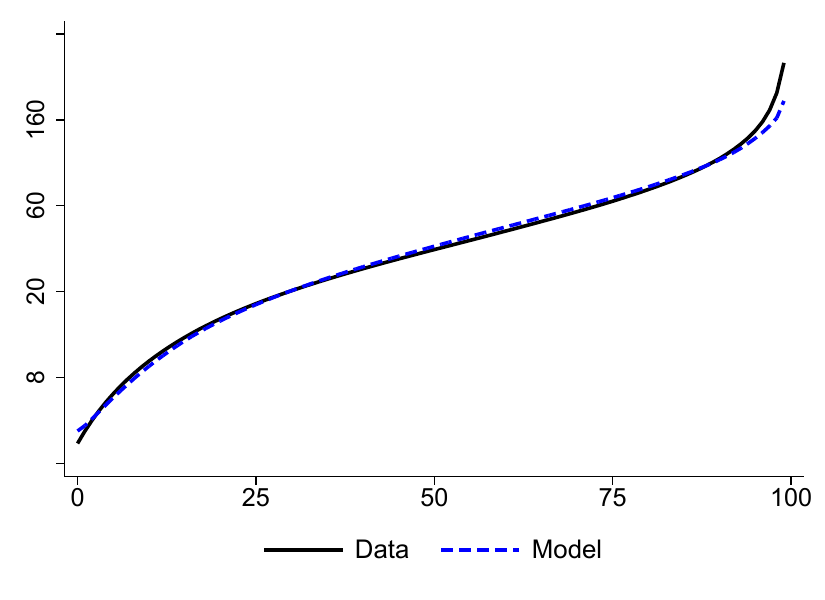}}
\end{center}
\vspace{-0.5 cm}
\caption{Data and Model Distribution of Individual Earnings} \label{f:annual_earnings_dm}
{\scriptsize \vspace{.2 cm} \Cref{f:annual_earnings_dm} compares the empirical earnings distribution to the model earnings distribution. The empirical distributions, which we display by solid black lines, follow \Cref{f:annual_earnings}, while the model distributions are presented by dashed lines. The left panel shows the empirical and model distribution for 1981, the right panel for 2013.}
\end{figure}

We use the cross-sectional distribution of earnings and the decomposition of the earnings variation into within-firm and between-firm variation to inform the worker and project distribution. We estimate the underlying distributions using earnings in the cross-section using the data underlying Figure \ref{f:annual_earnings}. The resulting model fit for cross-sectional earnings is summarized by Figure \ref{f:annual_earnings_dm}. Figure \ref{f:annual_earnings_dm} shows that our framework generates the observed dispersion in earnings, as well as changes in the distribution of earnings over time while somewhat underestimating earnings at the very top percentiles in both 1981 and 2013.\footnote{The parameters which govern the Beta distribution for worker skills are (0.96, 2.85) in 1981 and (1.89, 1.48) in 2013. The estimated parameters for the project value distribution are (0.98, 2.83) in 1981 and (1.66, 1.63) in 2013.} 

\begin{table}[!t]
\def\arraystretch{1.65}
\begin{center}
\caption{Model and Data Earnings Decomposition} \label{t:decomposition}
\begin{tabular}{l|ccc|ccc}
\hline  \hline
\hspace{2.5 cm} & \multicolumn{3}{c|}{Data}     & \multicolumn{3}{c}{Model}     \\
Moment  \hspace{0.38 cm} & \hspace{0.38 cm} 1981 \hspace{0.38 cm}  & \hspace{0.38 cm} 2013 \hspace{0.38 cm} & \hspace{0.38 cm} change \hspace{0.38 cm} & \hspace{0.38 cm} 1981 \hspace{0.38 cm} & \hspace{0.38 cm} 2013 \hspace{0.38 cm} & \hspace{0.38 cm} change \hspace{0.38 cm} \\ \hline
Between & 0.34     & 0.42 & \hspace{0.04 cm} 0.08 & 0.34     & 0.43 & \hspace{0.04 cm} 0.09 \\
Within & 0.66     & 0.58 & -0.08 & 0.66     & 0.57 & -0.09 \\ 
 \hline \hline
\end{tabular}
\end{center}
{\scriptsize \vspace{.1 cm} \Cref{t:decomposition} compares the empirical and model decomposition of earnings dispersion. The left panel shows the empirical decomposition of earnings variation in dispersion between firms and dispersion within firms following (\ref{e:var_decomposition}), while the right panel shows the model analog.}
\end{table}

\Cref{t:decomposition} displays the decomposition for the variance of log earnings between 1981 and 2013 in the data and the model. Our stylized model can account for the earnings decomposition in 1981. The fifth column shows that the model correctly attributes a third of the earnings variation to dispersion between firms, and two thirds to dispersion within firms. The model generates an increase in the share of the between-firm variance from 0.34 to 0.43 percentage points between 1981 and 2013, while capturing the overall increase in earnings dispersion as shown in \Cref{f:annual_earnings_dm}.\footnote{\citet{Gavilan:2012} studies the change in between and within firm earnings dispersion in response to the change in the price of capital and in response to the changes in the skill distribution in the setup of \citet{Kremer:1996} with endogenous capital.}


\begin{table}[!t]
\def\arraystretch{1.65}
\begin{center}
\caption{Model and Data Coworker Earnings} \label{t:coworker_salary}
\begin{tabular}{l|ccc|ccc}
\hline  \hline
\hspace{1.5 cm} & \multicolumn{3}{c|}{Data}     & \multicolumn{3}{c}{Model}     \\
Percentile  \hspace{0.38 cm} & \hspace{0.38 cm} 1981 \hspace{0.38 cm}  & \hspace{0.38 cm} 2013 \hspace{0.38 cm} & \hspace{0.38 cm} change \hspace{0.38 cm} & \hspace{0.38 cm} 1981 \hspace{0.38 cm} & \hspace{0.38 cm} 2013 \hspace{0.38 cm} & \hspace{0.38 cm} change \hspace{0.38 cm} \\ \hline
25 & 10.01     & 10.07 & 0.06 & 10.19    & 10.30 & 0.11 \\
50 & 10.30     & 10.45 & 0.15 & 10.44     & 10.63 & 0.19 \\
75 & 10.56     & 10.75 & 0.19 & 10.56     & 10.83 & 0.27 \\ 
90 & 10.67     & 10.98 & 0.31 & 10.46     & 10.70 & 0.24 \\ 
 \hline \hline
\end{tabular}
\end{center}
{\scriptsize \vspace{.1 cm} \Cref{t:coworker_salary} compares the model and data in terms of mean coworker earnings. The left panel shows mean coworker earnings at selected percentiles of the individual earnings distribution. The right panel shows their model analog. All numbers are log points.}
\end{table}

In addition to studying the variance decomposition of log earnings, we evaluate non-targeted average log earnings of coworkers at different percentiles of the earnings distribution. The results are in \Cref{t:coworker_salary}. Model and data align qualitatively up to the 75th percentile with quantitative deviations of 0.20 log points. At the top of the skill distribution, the data display a stronger positive correlation between worker earnings compared to our more negatively sorted model economy.

\subsection{Counterfactual}

The rise in earnings dispersion between 1981 and 2013 could be driven by changes in the worker or the firm side of the labor market, or a combination thereof. We use our framework to evaluate the drivers of the increase in earnings dispersion. To do this, we analyze counterfactual changes in earnings dispersion by only changing the worker distribution and by only changing the project value distribution.

\begin{table}[!t]
\def\arraystretch{1.65}%
\begin{center}
\caption{Model and Data Earnings Decomposition} \label{t:decomposition_structural}
\begin{tabular}{l|ccc|cc|cc}
\hline  \hline
\hspace{1.5 cm} & \multicolumn{3}{c|}{Model}     & \multicolumn{2}{c|}{Firm Effect} & \multicolumn{2}{c}{Worker Effect} \\
Moment  \hspace{0.22 cm} & \hspace{0.22 cm} 1981 \hspace{0.22 cm}  & \hspace{0.22 cm} 2013 \hspace{0.22 cm} & \hspace{0.22 cm} change \hspace{0.22 cm} & \hspace{0.22 cm} 2013 \hspace{0.22 cm} & \hspace{0.22 cm} change \hspace{0.22 cm} & \hspace{0.22 cm} 2013 \hspace{0.22 cm} & \hspace{0.22 cm} change \hspace{0.22 cm} \\ \hline
Between & 0.34     & 0.42 & \hspace{0.04 cm} 0.08  & 0.12 & -0.22  & 0.72 & 0.38  \\
Within & 0.66    & 0.58 &  -0.08      & 0.88 & \hspace{0.04 cm} 0.22  & 0.28 & -0.38  \hspace{-0.02 cm} \\ 
 \hline \hline
\end{tabular}
\end{center}
{\scriptsize \vspace{.1 cm} \Cref{t:decomposition_structural} compares the baseline and counterfactual model decomposition of earnings dispersion. The left panel shows the baseline model decomposition of earnings as in \Cref{t:decomposition}, while the middle and right panel show counterfactual decompositions. For the firm effect counterfactual, we evaluate the model with the worker distribution for 1981 and the firm distribution for 2013, while the worker effect counterfactual evaluates the model using the firm distribution for 1981 and the worker distribution for 2013.}
\end{table}

\Cref{t:decomposition_structural} shows a structural decomposition of changes in earnings dispersion from 1981 to 2013. The left panel repeats the baseline decomposition, while the middle panel and right panel present counterfactual results. The firm effect counterfactual evaluates the model for the distribution of workers in 1981 and the firm distribution in 2013. The middle panel demonstrates that by only changing the distribution of firm projects, the share of within-firm earnings dispersion would have increased by 22 percentage points. The worker effect counterfactual similarly evaluates the model using the project distribution of 1981 and the worker distribution of 2013. The right panel shows that the share of the within-firm earnings dispersion would have decreased by 38 percentage points. That is, our counterfactual analysis shows that both the changes in the worker and project distributions between 1981 and 2013 are important in the model in generating the observed change in earnings dispersion.\footnote{The quantitative analysis thus far assumes that teams are of size three – two workers and a project. In Appendix \ref{a:appquant}, we demonstrate that our model can also account for the decomposition of earnings within and across firms with teams of sizes four, five, and six. We also find that the non-targeted moments of average coworker earnings are robust to team size. As the number of team members increases, the firm and worker effect become more muted.}

\section{Conclusion}

We provide a complete solution to an assignment problem with heterogeneous firms and multiple heterogeneous workers whose skills are imperfect substitutes, that is, for a submodular technology. 


\clearpage


\vspace{-0.0cm}

{ {\
\bibliographystyle{econometrica}	
\vspace{-0.0cm}
\baselineskip20.0pt
\bibliography{BTZ_bib}
} }

\clearpage

\pagebreak

\renewcommand{\theequation}{A.\arabic{equation}} \setcounter{equation}{0}
\renewcommand{\thefigure}{A.\arabic{figure}}\setcounter{figure}{0}
\renewcommand{\thetable}{A.\arabic{table}}\setcounter{table}{0}
\setcounter{page}{1}
\newpage \appendix
\newpage

\begin{center}
{\Large Sorting with Teams\\}
\bigskip
{\Large Appendix \\}
\bigskip
{\large  Job Boerma, Aleh Tsyvinski and Alexander Zimin \\}
\bigskip
{\large November 2023}
\end{center}
\vspace{0.5cm}



\section{Proofs}

In this appendix, we formally prove the results in the main text.

\subsection{Equilibrium, Planning Problem and Duality} \label{a:eqdefn} 

We formally define an equilibrium for the economy.

\vspace{0.35 cm}
\noindent A firm with project $z$ chooses workers $x_1, x_2, \dots, x_m$ to maximize profits taking the wage schedule for workers $w$ as given. The firm problem is:
\begin{align}
v (z) = \max_{\{ x_i \}} \; y (x_1,\dots,x_m,z) - \sum^m_{i=1} w (x_i)  . \label{e:firm_problem}
\end{align}
Worker $x_i$ chooses to work for firm $z$ with coworkers $\{ x_j \}$ for $j \neq i$ to maximize their wage income. The worker takes the wage schedule for their coworkers and the firm value $v$ as given:
\begin{align}
w (x_i) = \max_{\{ x_j \} , z} \;  y (x_1,x_2,\dots,x_m,z ) - \sum^m_{j \neq i} w (x_j) -  v (z)  . \label{e:worker_problem}
\end{align}

An equilibrium is a wage function $w$, firm value function $v$, and feasible assignment function $\pi$, such that firms solve their profit maximization problem (\ref{e:firm_problem}), workers solve the worker problem (\ref{e:worker_problem}), and satisfy a feasibility constraint:
\begin{equation}
\int y(x_1,x_2,\dots,x_m,z ) \text{d} \pi = \sum_{i=1}^m \int w(x_i) \text{d} F_x + \int v(z) \text{d} F_z,
\end{equation}
which states that total output produced, $\int y(x_1,x_2,\dots,x_m,z ) \text{d} \pi$, equals total output distributed to workers and firms.

\vspace{0.5 cm}
\noindent We use the following relation between the planning problem and the dual formulation.

\vspace{0.05cm}
\begin{lemma}\label{l:duality}
Let $\pi \in \Pi (F_x, \dots, F_x, F_z)$ be a joint probability measure, and $(w,v)$ be functions such that the surplus $S(x_1,\dots,x_m,z) =y(x_{1},\dots,x_{m},z)-\sum w(x_{i})-v(z)\leq 0$ for all $(x_1,\dots,x_m,z)$. If there is a set $M \subset X \times \dots \times X \times Z$ such that $S(x_1,\dots,x_m,z) = 0$ on $M$ with the additional property that $\pi(M) =1$, then assignment $\pi$ is a primal solution and functions $w$ and $v$ are a dual solution.
\end{lemma}


\vspace{0.05cm}
\noindent Since the assignment in \Cref{l:duality} is concentrated on set $M$, we refer to $M$ as the set of potential matches. The proof to \Cref{l:duality} only makes use of a notion of weak duality. 

\vspace{0.5 cm}
\noindent \textbf{Weak Duality}. Let $\pi \in \Pi (F_{x_1}, \dots, F_{x_m}, F_z)$ be a joint probability measure, and $f_i \in L^1(X, F_{x_i})$, $g \in L^1(Z, F_{z})$ be integrable functions such that $y(x_1,\dots,x_m,z) \leq \sum\limits_{i=1}^m f_i(x_i) + g(z)$ for all teams $(x_1,\dots,x_m,z)$. Then\footnote{The minimum and maximum are attained by strong duality. We discuss this notion of duality in \Cref{s:duality_kellerer}.}
\begin{equation}
\min_{\{f_i \},g} \; \sum_{i=1}^m \int f_i(x_i) \text{d}F_{x_i} + \int g(z) \text{d}F_{z} \; \geq \; \max_{\pi \in \Pi} \; \int y(x_1,\dots,z) \text{d}\pi .  \label{e:weak_duality}
\end{equation}

\vspace{-0.15 cm}
\begin{proof}
For any functions $\{f_i\}$ and $g$ so that $y(x_1,\dots,x_m,z) \leq \sum\limits_{i=1}^m f_i(x_i) + g(z)$ we have:
\begin{equation*}
\max_{\pi \in \Pi} \int y(x_1,\dots,x_m,z) \text{d}\pi \leq \int \Big( \sum\limits_{i=1}^m f_i(x_i) + g(z) \Big)\text{d}\pi =  \sum_{i=1}^m \int f_i(x_i) \text{d}F_{x_i} + \int g(z) \text{d}F_{z} ,
\end{equation*}
where the equality follows as $\pi \in \Pi (F_{x_1}, \dots, F_{x_m}, F_z)$. Since the inequality holds for any $(\{f_i\},g)$ and $g$ it holds for $(\{f_i\},g)$ that minimize the right-hand side.
\end{proof}

\vspace{0.05 cm}
\noindent We use weak duality to establish \Cref{l:duality} by contradiction. 

\vspace{0.35 cm}
\noindent \textbf{Proof of \Cref{l:duality}}. Suppose by contradiction that $\hat{\pi}$ does not solve the primal problem. Then there exists another probability measure $\pi$ such that 
\begin{align*}
\max_{\pi \in \Pi} \int y(x_1,\dots,x_m,z) \text{d}\pi & > \int y(x_1,\dots,x_m,z) \text{d}\hat{\pi} = \sum_{i=1}^m \int \hat{f}_i(x_i) \text{d}F_{x_i} + \int \hat{g}(z) \text{d}F_{z}\notag \\
& \hspace{3.29 cm} \geq \min_{\{f_i \},g} \; \sum_{i=1}^m \int f_i(x_i) \text{d}F_{x_i} + \int g(z) \text{d}F_{z} ,
\end{align*}
where the equality follows by the assumption. This contradicts weak duality (\ref{e:weak_duality}).

Suppose by contradiction that the functions $(\{\hat{f}_i\},\hat{g})$ do not solve the dual problem. Then there exists functions $(\{f_i\},g)$ such that
\begin{align*}
\min_{\{f_i \},g} \; \sum_{i=1}^m \int f_i(x_i) \text{d}F_{x_i} + \int g(z) \text{d}F_{z} & < \sum_{i=1}^m \int \hat{f}_i(x_i) \text{d}F_{x_i} + \int \hat{g}(z) \text{d}F_{z} \notag \\ 
& =  \int y(x_1,\dots,x_m,z) \text{d}\hat{\pi} \leq \max_{\pi \in \Pi} \int y(x_1,\dots,x_m,z) \text{d}\pi,
 \end{align*}
where the equality follows by the assumption. This inequality contradicts weak duality (\ref{e:weak_duality}).


\vspace{0.35 cm}
\noindent While \Cref{l:duality} connects the planning problem and the dual formulation, it does not connect either to our definition of an equilibrium. The link is Proposition \ref{prop:equilibrium}.\footnote{Similar to \citet{Gretsky:1992}, we decentralize the solutions to the primal and dual problem as a competitive equilibrium. A similar argument for one-to-one assignment problems is in Proposition 2.3 of \citet{Galichon:2018} and discussed in \citet{Chade:2017}.}

\vspace{0.05cm}
\begin{proposition} \label{prop:equilibrium}
Let $\pi$ solve the planning problem and let $(w,v)$ solve the dual problem. Then, wage schedule $w$, firm value function $v$, and assignment $\pi$ are an equilibrium.
\end{proposition}

\vspace{-0.15cm}
\begin{proof}
The proof follows from the constraints on the dual problem, or $w (x_i) \geq y(x_1,\dots,x_m,z) - \sum\limits_{j \neq i}^m w(x_j) - v (z)$, which imply: \vspace{-0.2 cm}
\begin{equation}
w (x_i) \geq \hspace{0.03 cm} \max_{x_{\neg i},z} \Big(  y(x_1,\dots,x_m,z) - \sum\limits_{j \neq i}^m w(x_j) - v (z) \Big).
\end{equation} 
The equality follows since $y(x_1,\dots,x_m,z) = \sum\limits_{i=1}^m w(x_i) + v(z)$ on the set of potential matches $M$ on which the assignment function is concentrated. The argument for the firm value function $v$ is identical. The equality also implies that the goods market clears.\end{proof}

\noindent Proposition \ref{prop:equilibrium} implies that assignment function $\pi$, wage schedule $w$, and firm value function $v$ are an equilibrium.\footnote{We observe that an economy with $m$ identical worker distributions $F_x$ with mass equal to one for each position is equivalent to an economy with a single distribution of workers $F_x$ with mass equal to $m$. Owing to the symmetry of worker skills in the team quality function (\ref{e:team_quality}), these are equivalent. Intuitively, any equilibrium worker assignment for a planning problem with distinct worker distributions can be made symmetric. By assigning $1/m$ of the mass of equilibrium worker pairings to one distribution and half to another, we obtain $m$ identical worker distributions. The optimum value is unaffected as the same distribution of worker pairings can be made due to symmetry.} Given this connection, we call $w$ the wage schedule and $v$ the firm value.\footnote{The proposition proposes a clear path for characterizing equilibrium. We separately solve the planning problem and the dual problem, which are linked through the primitive project value distribution, worker distribution, and technology. Moreover, Proposition \ref{prop:equilibrium} shows that the decentralized equilibrium coincides with the solution to the planning problem so the competitive equilibrium is efficient.}

\subsection{Majorization Inequality} \label{proof:support2}

We verify the second assumption to the majorization inequality (\ref{e:karamata2}). The goal is to show that the mixed an countermonotonic assignment is optimal. We do not require all properties of the mixed and countermonotonic assignment described in Definition \ref{d:branchbowl}. Specifically, we prove Proposition \ref{p:support}.

\vspace{0.1 cm}
\begin{proposition} \label{p:support}
Suppose assignment $\pi \in \Pi$ satisfies two conditions. First, the loss of the $1-q$ largest teams are the same. Second, consider the first $t \leq q$ teams. Worker $x_i \in F_i$ belongs to one of them if and only if $x_i \in [0,I_i(\underline{t}_i(t))]$ or $x_i \in [I_i(1 - \bar{t}_i(t)),1]$. Then, $\pi \in \Pi$ solves the planning problem (\ref{pp}).
\end{proposition}
\vspace{0.1 cm}


\noindent For any assignment $\pi$, let $S_\gamma(t)$ denote the cumulative logarithmic losses for the $t$ teams with the smallest losses:
\begin{equation}
S_\gamma(t) = \int_0^t \log I_{\gamma}(p) \text{d}p.
\end{equation}
To verify (\ref{e:karamata2}) we compare $S_{\gamma^*}$ to $S_\gamma$. 

Suppose there exists a mixed and countermonotonic assignment $\pi^*$ with an associated loss distribution $\gamma^*$. Next, we prove $S_{\gamma^*}$ is the maximal cumulative logarithmic loss in line with the second assumption to the majorization inequality, that is, for any assignment $\pi$, $S_{\gamma^*}(t) \geq S_{\gamma}(t)$ 
for all $t \in [0, 1]$. To establish this, we first bound the cumulative logarithmic losses in the countermonotonic sets of $\pi^*$ from below.

\begin{lemma}\label{lemma:S_gamma}
For any feasible assignment $\pi$ and for all $t \in [0, q]$, $S_{\gamma}(t) \leq S_{\gamma^*}(t)$.
\end{lemma}

\begin{proof}
The function $S_{\gamma}$ is the sum of logarithmic losses for the $t$ teams with smallest output losses. For any other set $U$ with mass $\pi(U) = t$, by definition:
\begin{equation}
S_{\gamma}(t) \leq \int \log \ell(x_1, \dots, x_m,z) \text{d} \pi_U . \label{a:defn1}
\end{equation} 
where $\pi_U$ restricts assignment $\pi$ to the set $U$.

Define the set of teams with the $\underline{t}_k(t)$ highest skill workers in dimension $k$ as: 
\begin{equation}
V_k := \{(x_1, \dots, x_m,z) \in [0, 1]^n \; \vert \; 0 \leq x_k \leq I_k(\underline{t}_k(t))\} 
\end{equation}
The measure of the set $V_k$ with respect to assignment $\pi$ is $\underline{t}_k(t)$. The union of sets $V = V_1 \cup \dots \cup V_n$ thus has measure of at most $t$ with respect to assignment $\pi$. We construct a set $U \supset V$ such that $\pi(U) = t$ by adding some additional set of teams to the set $U$.

Given that $\pi_U$ represents the restriction of the assignment function $\pi$ to the specified set $U$, we use (\ref{a:defn1}) to write:
\begin{equation*}
S_{\gamma}(t) \leq \int \log \ell(x_1 , \dots, x_m,z) \text{d} \pi_U = \sum_{k = 1}^m \int \log x_k \text{dPr}_{x_k}(\pi_U) + \int \log z \text{dPr}_{z}(\pi_U),     
\end{equation*}
where $\text{Pr}_{x_k}$ is the projection of measure $\pi_U$ onto the $k$-th dimension. To characterize the upper bound, we bound $\int \log x_k \text{dPr}_{x_k}(\pi_U)$. Since $V_k \subset U$, the density function that corresponds to measure $\text{Pr}_{x_k}(\pi_U)$ is equal to the skill density $f_k(x)$ for all $0 \leq x \leq I (\underline{t}_k(t))$ and is less than skill density $f_k$ otherwise. Hence, 
\begin{align*}
\int \log x_k \text{dPr}_{x_k}(\pi_U) & \leq \int_0^{I_k(\underline{t}_k(t))} \log (x_k) f_k(x_k)\,\text{d}x_k + \int_{I_k(1 - \bar{t}_k(t))}^1 \log (x_k) f_k(x_k)\,\text{d}x_k\\
&= \int_0^{\underline{t}_k(t)} \log I_k(p) \,\text{d}p + \int_{1 - \bar{t}_k(t)}^1 \log I_k(p) \,\text{d}p.
\end{align*}
where the inequality follows as the second term on the right-hand side captures the contribution to logarithmic losses by the $\bar{t}_k(t)$ lowest skill workers. The equality follows by change of variables. Summing over all $k$ dimensions, we obtain:
\begin{equation}
S_{\gamma}(t) \leq \sum_{k = 1}^n \Big(\int_0^{\underline{t}_k(t)} \log I_k(p) \,dp + \int_{1 - \bar{t}_k(t)}^1 \log I_k(p) \,dp\Big) = S_{\gamma^*}(t),   
\end{equation}
which was what we wanted.
\end{proof}

\begin{lemma} \label{lemma:Sbounds}
For any feasible assignments $\gamma$ and for all $t \in [q, 1]$,
\[
S_{\gamma}(t) \le S_{\gamma^*}(t).
\]
\end{lemma}
\begin{proof}
We first observe that cumulative logarithmic loss $S_{\gamma}$ is a convex function as it is the integral over an increasing function. Since $S_{\gamma}$ is convex, for all $t \in [q, 1]$: 
\begin{equation}
S_\gamma(t) \leq \frac{1 - t}{1 - q}S_\gamma(q) + \frac{t - q}{1 - q}S_\gamma(1) \leq \frac{1 - t}{1 - q}S_{\gamma^*}(q) + \frac{t - q}{1 - q}S_{\gamma^*}(1) \label{e:Shat1}
\end{equation}
where the final inequality follows because $S_\gamma(q) \leq S_{\gamma^*}(q)$ by \Cref{lemma:S_gamma}, and $S_{\gamma}(1) = S_{\gamma^*}(1)$ by the verification of first assumption to the majorization inequality.

The integral over logarithmic losses $S_{\gamma^*}(t)$ is linear for teams with the highest losses $t \in [q, 1]$ by the condition that the loss of the $1-q$ largest teams are the same. Hence, for teams $t \in [q, 1]$:
\begin{equation}
S_{\gamma^*}(t) = \frac{1 - t}{1 - q}S_{\gamma^*}(q) + \frac{t - q}{1 - q}S_{\gamma^*}(1) \label{e:Shat2}
\end{equation}
Combining (\ref{e:Shat1}) and (\ref{e:Shat2}) the result follows.
\end{proof}

\vspace{0.15 cm}
\noindent We combine \Cref{lemma:S_gamma} and \Cref{lemma:Sbounds} to verify the second assumption of the majorization inequality, which concludes the proof.

\subsection{Characterization of Mixed and Countermonotonic Assignment} \label{pf:branchbowl}

The mixed-and-countermonotonic assignment in Definition \ref {d:branchbowl} may appear general, but in fact, the countermonotonic sets $b_i(t)$ fully determine the assignment $\pi$.

First, at each percentile $t \le q$ in the distribution of output losses, there are $n$ teams: one from each countermonotonic set. The logarithmic output loss at percentile $t \leq q$, which we denote $l(t)$, equals\footnote{For each percentile $t \leq q$, consider all $n$ teams at this percentile. The logarithmic loss of the $i$-th team equals $l_i(t) = \log I_i(\underline{t}_i(t)) - \log I_i(1 - \bar{t}_i(t)) + \sum \log I_j(1 - \bar{t}_j(t))$. Since the $n$ teams are all at the $t$-th percentile in the distribution of output losses their losses are identical, and hence $\lambda(t)$ is independent of $i$. Using the skill gap $\lambda$, we express $l(t)$ as in equation (\ref{eq:logloss_definition}). Summing the definition of the skill gap over $i \leq m$: $m \lambda (t) = \sum_{i \leq m} ( \log I_i(1 - \bar{t}_i(t)) - \log I_i(\underline{t}_i(t)) )$, and by adding and subtracting $\log I_n(\underline{t}_n(t))$: $m \lambda (t) = \log I_n(\underline{t}_n(t)) + \sum_{i \leq m} \log I_i(1 - \bar{t}_i(t)) - \sum \log I_i(\underline{t}_i(t))$ and thus equation (\ref{eq:logloss_definition}).\label{e:footnotelosses}}
\begin{equation}\label{eq:logloss_definition}
l(t) = m \lambda(t) + \sum_{i = 1}^{n}\log I_i(\underline{t}_i(t)).
\end{equation}
where $\lambda(t)$ is the skill gap between the high-skill and low-skill worker that are in teams at percentile $t \leq q$, which is identical across distributions $i$:
\begin{equation}\label{eq:lambda_definition}
\lambda(t) = \log I_i(1 - \bar{t}_i(t)) - \log I_i(\underline{t}_i(t)) .
\end{equation}
Since the functions $I_i$, $\underline{t}_i$, and $\bar{t}_i$ are continuous, the functions $\lambda$ and $l$ are continuous too. It follows from the definition of $l$ that the logarithmic loss function is increasing. 

Next, we connect the functions $\underline{t}_i$ and $\bar{t}_i$. Teams in the countermonotonic sets have low output losses and pair a single low value from one distribution with high values from the remaining distributions. Since the quality of the high-skill worker decreases with team loss percentile $t$, that is, $\underline{t}_i(t)$ increases from $\underline{t}_i(0) = 0$, it follows that the $t$ teams with the lowest losses draw a single high-skill worker from each distribution, so that for $t \leq q$:
\begin{equation}\label{eq:t_underbar_sum}
\underline{t}_1(t) + \dots + \underline{t}_{n}(t)= t.
\end{equation}
Equivalently, high values from the $i$-th distribution, starting from $\bar{t}_i(t)=0$, are paired into teams with low values in the remaining countermonotonic sets. Thus, for $t \leq q$:
\begin{equation}\label{eq:t_bar_equation}
\bar{t}_i(t) = \sum\limits_{j \neq i}\underline{t}_j(t) = t - \underline{t}_i(t).
\end{equation}
where the final equality follows from (\ref{eq:t_underbar_sum}).

Finally, we consider the part of the assignment $\pi$ that is located on the mixed set. Given functions $\{ \underline{t}_i \}^{n}$, we define for $x \in [0,q]$, the average logarithmic loss for mediocre workers and jobs $h(x) = \frac{1}{1 - x}\sum\int_{\underline{t}_i(x)}^{\bar{t}_i(x)} \log I_i(t) \text{d} t$.
By construction, the logarithmic loss of any team located on mixed set $B$ is the same and equal to $h(q)$. We can compute it directly:
\begin{equation}\label{eq:lh_relation}
l(q) = \frac{1}{1-q} \int_B \log x_1\dots x_{m}z \text{d} \pi = \frac{1}{1 - q}\sum_{k = 1}^{n} \int_{\underline{p}_k}^{\bar{p}_k} \log x_k\, \text{d} F_k(x_k) = h(q).
\end{equation}
This equation asserts continuity of the assignment: the output loss of teams at the end of countermonotonic sets, $l(q)$, equals the loss of teams located on the mixed set, $h(q)$. 

If a mixed and countermonotonic assignment exists, then equations (\ref{eq:logloss_definition}) to (\ref{eq:lh_relation}) are satisfied. Together with the ability to mix mediocre workers and projects, (\ref{eq:logloss_definition}) to (\ref{eq:lh_relation}) are in fact sufficient to show that a mixed and countermonotonic assignment exists as we show in Proposition \ref{e:propbbapp}.\footnote{Formally, we say that the distributions $\{ F_i \}^{n}_{i=1}$ restricted to the mediocre intervals $[I_i(\underline{t}_i(q)), I_i(1 - \bar{t}_i(q))]$ can be mixed if there exists an assignment between these distribution such that the output loss of each team is identical.}

\begin{proposition}\label{e:propbbapp}
Consider continuous non-negative increasing functions  $\{ \underline{t}_i \}^{n}_{i=1}$ and a mass $0 \leq q \leq 1$ which is located on the countermonotonic sets. There exists a mixed and countermonotonic assignment defined by these functions if and only if
\begin{enumerate}[noitemsep]
    \item For each $t \in [0, q]$, $\underline{t}_1(t) + \dots + \underline{t}_{n}(t) = t$ (\ref{eq:t_underbar_sum});
    \item For each $t \in [0, q]$, $\bar{t}_i(t) = t - \underline{t}_i(t)$ are non-negative and increasing (\ref{eq:t_bar_equation});
    \item For each $t \in [0, q]$, the skill gap between the high-skill and low-skill worker $\lambda(t) = \log I_i(1 - \bar{t}_i(t)) - \log I_i(\underline{t}_i(t))$ is independent of $i$ (\ref{eq:lambda_definition});
    \item For each $t \in [0, q]$, the function \(
    l(t) = m\lambda(t) + \sum\limits_{i = 1}^{n}\log I_i(\underline{t}_i(t))
    \)
    is increasing in $t$ (\ref{eq:logloss_definition});
    \item $h(q) = l(q)$ (\ref{eq:lh_relation});
    \item The restriction of $\{ F_i \}^{n}_{i=1}$ to the intervals $[I_i(\underline{t}_i(q)), I_i(1 - \bar{t}_i(q))]$ is mixable.
\end{enumerate}
\end{proposition}

\begin{proof} To prove Proposition \ref{e:propbbapp} it remains to be shown that if Conditions 1 to 6 apply, then there exists a mixed and countermonotonic assignment defined by the functions $\{ \underline{t}_i \}$ and a mass $q$ located on countermonotonic sets. The proof is split into two parts. First, we construct the assignment $\pi$ associated with $\{ \underline{t}_i \}$ and show that it is feasible. Second, we show that assignment $\pi$ is a mixed and countermonotonic assignment. While the result is rather intuitive, this technical proof verifies this formally.


\vspace{0.5 cm}

\noindent We construct the mixed and countermonotonic assignment completely using the functions $\{ \underline{t}_i \}$ and the mass $q$ located on the countermonotonic sets. To understand that the assignment is fully determined, first observe that we can construct the increasing functions $\{ \bar{t}_i \}$ by (\ref{eq:t_bar_equation}). Given the functions $\{ \underline{t}_i, \bar{t}_i \}$ and a mass $q$, we use the definition of countermonotonic sets in Definition \ref{d:branchbowl} to construct countermonotonic sets. 


We next identify the threshold percentiles $\bar{p}_i$ and $\underline{p}\vphantom{p}_i$ and show that these threshold percentiles are inside the unit interval. We denote $\underline{p}\vphantom{p}_i = \underline{t}_i(q)$ and $\bar{p}_i = 1 - \bar{t}_i(q) = \underline{p}\vphantom{p}_i + (1-q)$, where the second equality follows from equation (\ref{eq:t_bar_equation}) evaluated at $t = q$. The threshold $\underline{p}\vphantom{p}_i$ is non-negative since $\underline{t}_i$ is a non-negative function, which also implies that $\bar{p}_i = \underline{p}\vphantom{p}_i + (1-q)$ is non-negative. Next, since $\bar{t}_i(q) \ge 0$ by (\ref{eq:t_bar_equation}) it implies that $\bar{p}_i \le 1$, and therefore $\underline{p}\vphantom{p}_i = \bar{p}_i - (1-q) \le 1$. Hence, both $\underline{p}\vphantom{p}_i$ and $\bar{p}_i$ are thresholds inside the unit interval. 

Given the threshold percentiles $\{ \underline{p}\vphantom{p}_i, \bar{p}_i \}$, we know the support of the mixed set. So far, this procedure only suggests a support of an assignment. A key step is to show there exists a measure concentrated on this support. 

The first step in showing there exists a measure concentrated on the suggested support is to reparameterize the countermonotonic set $b_i(t)$ from percentiles in the distribution of team losses $t$ to percentiles in the distribution of the $i$-th worker, which we here denoted by $q_i$. Since the function $\underline{t}_i$ is increasing, its inverse $\underline{t}^{-1}_i$ is well-defined. Function $\underline{t}_i(t)$ takes percentile $t$ of team losses and returns the percentile of the worker in the $i$-th distribution that is part of this team. The inverse function takes the percentile of the worker in the $i$-th distribution and returns the percentile in the distribution of team losses. This percentile can thus directly be substituted into the definition of a countermonotonic set $b_i$. For each $i$, we formally consider the mapping 
\begin{equation*}
\tilde{b}_i(q_i) = b_i(\underline{t}^{-1}_i(q_i)),
\end{equation*}
where $q_i \in [0, \underline{p}\vphantom{p}_i]$ and $b_i$ is the countermonotonic set defined in Definition \ref{d:branchbowl}.


Let $\lambda_i$ be the Lebesgue measure restricted to the interval $[0, \underline{p}\vphantom{p}_i]$, that is, all high-skill workers in distribution $i$ up to percentile $\underline{p}\vphantom{p}_i$. Define the corresponding countermonotonic set assignment $\pi_i$ as the push-forward image of the measure $\lambda_i$ under the mapping $q_i \to \tilde{b}_i(q_i)$. We take $\{ \lambda_i \}^n_{i=1}$ and will then  analyze the corresponding marginal distributions of these countermonotonic sets $\pi_{b} = \pi_1 + \dots + \pi_{n}$. First, we find $\tilde{b}_{ii}(q)$:
\begin{equation}\label{eq:left_marginal_proj}
\tilde{b}_{ii}(q) = b_{ii}(\underline{t}^{-1}_i(q)) = I_i \big(\underline{t}_i \big(\underline{t}^{-1}_i(q) \big)\big) = I_i(q),
\end{equation}
where the second equality follows by definition of $b_{ii}$ in Definition \ref{d:branchbowl}. In words, the $i$-th marginal distribution of assignment $\pi_i$ is the push-forward image of the uniform measure on $[0, \underline{p}\vphantom{p}_i]$ under the mapping $x \to I_i(x)$, which is the distribution $F_i$ restricted to the interval $[0, I_i(\underline{p}\vphantom{p}_i)]$.

How many low skill workers are taken by the countermonotonic sets? We assign low skill workers when we construct countermonotonic sets corresponding to the high skill workers in the remaining distributions. We next calculate how many low skill workers from distribution $k$ are assigned according to the assignment function.

For each percentile $s_k \in [\bar{p}_k, 1]$ among the low skill workers in the marginal distribution of worker $k$, we find the measure of the set $S_k = \{x \in [0, 1]^{n} \mid x_k \geq I_k(s_k)\}$ with respect to the measure $\pi_{i}$ for some $i \neq k$. In words, how many low-skilled workers from the $k$-th distribution $x_k \geq I_k(s_k)$ are taken by countermonotonic set $\pi_i$. First, since the function $\bar{t}_k$ is increasing from $0$ to $1 - \bar{p}_k$ by (\ref{eq:t_bar_equation}), its inverse $\bar{t}_k^{-1}$ is well-defined on the interval $[0, 1 - \bar{p}_k]$. We can thus write:
\begin{align*}
x_k =\tilde{b}_{ik}(q_i) = b_{ik}(\underline{t}^{-1}_i(q_i)) \geq I_k(s_k)  &\Longleftrightarrow I_k \big(1 - \bar{t}_k(\underline{t}_i^{-1}(q_i))\big) \geq I_k(s_k) \Longleftrightarrow \bar{t}_k \big(\underline{t}_i^{-1}(q_i)\big) \leq 1 - s_k\\ & \Longleftrightarrow q_i \leq \underline{t}_i\big(\bar{t}_k^{-1}(1 - s_k)\big).
\end{align*}
where $q_i$ is again the percentile of the high-shill workers from the $i$-th distribution, and the second equality follows (\ref{eq:left_marginal_proj}). Worker $x_k$ is in the set of lowest skill workers if we are looking at a percentile $q_i$ that satisfies the final inequality. This means that the measure of $S_k$ with respect to $\pi_i$ is equal to $\underline{t}_i(\bar{t}_k^{-1}(1 - s_k))$. Thus, the measure of $S_k$ with respect to $\pi_{b}$ is:
\begin{equation*}
\pi_{b}(S_k) = \sum_{i \ne k}\underline{t}_i\big(\bar{t}_k^{-1}\big(1 - s_k\big)\big) = \bar{t}_k\big(\bar{t}_k^{-1}(1 - s_k)\big) = 1 - s_k.
\end{equation*}
where the second equality follows from the definition of $\bar{t}_k$ in equation (\ref{eq:t_bar_equation}). Thus, the $k$-th marginal distribution of $\pi_{b}(S_k)$ has the same cumulative distribution as $F_k$ restricted to $[I_k(s_k), 1]$. For each $s_k$, the mass of low-skill workers above the $s_k$-th percentile that we assign is exactly $1 - s_k$. Together with \cref{eq:left_marginal_proj} this implies 
\begin{equation}\label{eq:branch_proj}
\mathrm{pr}_i(\pi_{b}) = F_i \big\vert_{[0, I_i(\underline{p}\vphantom{p}_i)] \cup [I_i(\bar{p}_i), 1]}.
\end{equation}

\noindent Now, let $\pi_{c}$ be an assignment that pairs mediocre workers in teams with the same total output loss, and denote $\pi = \pi_{b} + \pi_{c}$. \Cref{eq:branch_proj} and Condition 6 directly imply that $\pi\in \Pi(\{ F_i \}^n_{i=1})$.


\vspace{0.5 cm}
\noindent The second part of the proof verifies that the constructed assignment is indeed a mixed and countermonotonic assignment. Given the constructed assignment $\pi$, we check that all properties of the mixed and countermonotonic assignment in Definition \ref{d:branchbowl} are satisfied.

The first property is directly verified since we discussed above that $\bar{p}_i = \underline{p}\vphantom{p}_i + (1-q)$.

To verify the second property, we validate both the countermonotonic sets and the mixed set. It follows directly from their construction that the countermonotonic sets have the required form. Next, we verify the mixed set. For the verification of the mixed set there are two properties to validate. First, that the values $x_i$ are bounded between $I_i(\underline{p}\vphantom{p}_i)$ and $I_i(\bar{p}_i)$ is immediate as these are the mediocre workers and jobs. 


Second, we show the mixed teams attain a constant, in particular, $I_i(\underline{p}\vphantom{p}_i)\prod\limits_{j \ne i} I_j(\bar{p}_j)$. To validate that the loss of any team on the mixed set, $L_{B}$, equals this constant, we use the definition of the average logarithmic loss for mediocre workers and jobs $h(x) = \frac{1}{1 - x}\sum\int_{\underline{t}_i(x)}^{\bar{t}_i(x)} \log I_i(t) \text{d} t$, we have $\log L_{B} = h(q)$. By the fifth condition (\ref{eq:lh_relation}), $h(q) = l(q)$, and by the fourth and third condition, we write:
\begin{align*}
l(q) &= \log I_i(\underline{t}_i(q)) + \sum_{j \neq i}\log I_j(1 - \bar{t}_j(q)) = \log I_i(\underline{p}\vphantom{p}_i) + \sum_{j \neq i} \log I_j(\bar{p}_j).
\end{align*}
Hence we establish the loss on the mixed set attains $I_i(\underline{p}\vphantom{p}_i)\prod\limits_{j \ne i} I_j(\bar{p}_j)$, verifying the second property.

The third property of Definition \ref{d:branchbowl} need not be verified directly as it is implied by the statement of the countermonotonic sets in the second property.

Finally, for each percentile $t$, we find the mass of teams with output losses below that of the team located at point $b_i(t)$. The logarithmic loss of this team equals $l(t)$. Since the function $l(t)$ is increasing by (\ref{eq:logloss_definition}) and $\log L_{B} = l(q) > l(t)$, the logarithmic loss of any team from the mixed set is larger than the logarithmic loss of the team $b_i(t)$. 


We next find all the teams with a loss below the loss of team $b_i(t)$ located on the $i$-th countermonotonic set. The logarithm of the loss of team $b_i(x)$ is equal to $l(x)$ by Conditions 3 and 4 in Proposition \ref{e:propbbapp}. Since the loss function $l(t)$ is increasing, $l(x) \leq l(t) \Longleftrightarrow x \leq t$. As a result, the mass on the countermonotonic set $b_i$ such that $x \leq t$ is equal to $\underline{t}_i(t)$. In turn, this implies that the mass of teams with logarithmic losses below the logarithmic loss of $b_i(t)$ is $\sum\limits_{i = 1}^{n}\underline{t}_i(t) = t$, where the equality follows by the first condition in Proposition \ref{e:propbbapp}. This means that $b_i(t)$ in indeed located at the $t$-th quantile of the teams distribution.\end{proof}

\subsection{Existence of the Mixed and Countermonotonic Assignment} \label{a:existencemc}


In this appendix, we establish the existence of a mixed and countermonotonic assignment.\footnote{For multi-sided assignment games \citet{Alkan:1988} and \citet{Quint:1991} show that stable assignments may not exist. \citet{Sherstyuk:1999} analyzes multi-sided assignment games with supermodularity and provides conditions under which stable assignments exists.} We prove our results under Assumption \ref{assumption1}.\footnote{When the probability density function is differentiable, Assumption \ref{assumption1} is equivalent to bounding the elasticity of the density function from below by $-1$. All distributions with non-decreasing density functions trivially satisfy this restriction. Assumption \ref{assumption1} is satisfied by Beta distribution, truncated exponential and lognormal distributions for a subset of parameters, amongst others. Naturally, Assumption \ref{assumption1} holds for linear combinations of density functions that satisfy it.}

\vspace{0.10 cm}
\begin{assumption}\label{assumption1} 
For all $x_i$, $x_i f_i(x_i)$ is non-decreasing. \label{e:xfx}
\end{assumption}
\vspace{0.10 cm}

To construct the optimal assignment, we find continuous non-negative increasing functions $\{ \underline{t}_i\}$ and a threshold $q$ satisfying all the conditions of Proposition \ref{e:propbbapp}. Conditions 1 and 3 in Proposition \ref{e:propbbapp} can be considered a system of functional equations:
\begin{align}
t & = \underline{t}_1(t) + \dots + \underline{t}_{n}(t), \tag{\ref{eq:t_underbar_sum}} \\
\lambda(t) & = \log I_i(1 - \bar{t}_i(t)) - \log I_i(\underline{t}_i(t)) \tag{\ref{eq:lambda_definition}}
\end{align}
for all $i$. The structure of the proof is as follows. First, we show when the system of functional equations (\ref{eq:lambda_definition}) and (\ref{eq:t_underbar_sum}) has a unique solution. Second, we show this unique solution satisfies the remaining Conditions 4 to 6.\footnote{Condition 2 is implied by Condition 1 and the monotonicity of $\{ \underline{t}_i \}$. Specifically, $\bar{t}_i(t) = \sum_{j\neq i} \underline{t}_j(t)$ is increasing because $\underline{t}_j(t)$ are increasing.} That is, Conditions 4 to 6 are then properties of the solution.

To show that the system of functional equations specified by Conditions 1 and 3 has a unique solution, we first introduce new notation to transform the system of equations. 

Specifically, we introduce new functions $\tilde{y}_i(t)$, which capture the negative logarithmic skills of the high-skilled workers on the $t$-th team located on the $i$-th countermonotonic set:
\begin{equation}
\tilde{y}_i(t) = -\log I_i(\underline{t}_i(t)). \label{e:ytildedef}
\end{equation}
Since $0 \leq I_i \leq 1$, we note $0 \leq \tilde{y}_i < \infty$. We also introduce the distribution function $\tilde{F}_i(x)$, which is defined as the cumulative distribution function of negative logarithmic skills of workers from the $i$-th distribution, or $
\tilde{F}_i(t) := P(\tilde{y} \leq t) = P(x \geq \exp(-t)) = 1 - F_i(\exp(-t))$. Using this, we connect $\tilde{y}_i(t)$ to $\underline{t}_i(t)$:
\begin{equation}
\underline{t}_i(t) = 1 - \tilde{F}_i(\tilde{y}_i(t)).  \label{e:linktildeyt}
\end{equation}
which is verified from $\tilde{F}_i(t) = 1 - F_i(\exp(-t))$, together with $\tilde{y}_i(t) = -\log I_i(\underline{t}_i(t))$.

With these new functions, we write the system of functional equations (\ref{eq:lambda_definition}) and (\ref{eq:t_underbar_sum}) as:
\begin{align*}\label{eq:modified_system}
    t & = \sum_{i = 1}^{n} \big(1 - \tilde{F}_i(\tilde{y}_i(t))\big) \\
    t & = \tilde{F}_i \big( \tilde{y}_i(t) - \lambda(t)\big) + \big(1 - \tilde{F}_i(\tilde{y}_i(t))\big) \notag
\end{align*}
for all $i$.\footnote{The second equation is derived from equation (\ref{eq:lambda_definition}) as follows $\lambda(t) = \log I_i(1 - \bar{t}_i(t)) - \log I_i(\underline{t}_i(t)) = \log I_i(1 - t + \underline{t}_i(t)) - \log I_i(\underline{t}_i(t))$ where the second equality follows by (\ref{eq:t_bar_equation}). We simplify the right-hand expression using (\ref{e:ytildedef}) as $\tilde{y}_i(t) - \lambda(t) = -\log I_i(1 - t + \underline{t}_i(t))$ which is equivalent to $1 - t + \underline{t}_i(t) = 1 - \tilde{F}_i(\tilde{y}_i(t) - \lambda(t))$. Reorganizing this equality using (\ref{e:linktildeyt}), we obtain $t = \tilde{F}_i\left( \tilde{y}_i(t) - \lambda(t)\right) + \underline{t}_i(t) = \tilde{F}_i\left( \tilde{y}_i(t) - \lambda(t)\right) + (1 - \tilde{F}_i(\tilde{y}_i(t)))$. Suppose that all distribution functions $\tilde{F}_i$ are concave on $[0, \infty)$, then we can use Lemma 3.1 in \citet{Jakobsons:2016} to show the system of equations has a unique solution. Since we have a symmetric system of functional equations with a unique solution the solution to the system of functional equations is itself symmetric. Uniqueness of the solution thus implies $( \underline{t}_i, \bar{t}_i) $ is identical for all workers $i$. As a result, the marginal worker product $m_i$ in equation (\ref{e:m}) is identical for all worker distributions, or $m_i = m$.}





Given a unique solution, we write the conditions of Proposition \ref{e:propbbapp} in terms of $\tilde{y}_i(t)$. First, we check that $\{ \underline{t}_i \}$ are increasing functions. First, the functions $\underline{t}_i(t) = 1 - \tilde{F}_i(\tilde{y}_i(t))$ is non-negative if and only if $\tilde{F}_i(\tilde{y}_i(t)) \leq 1$. Next, $\underline{t}_i(t)$ is increasing if and only if $\tilde{F}_i(\tilde{y}_i(t))$ is decreasing. By the monotonicity of $\tilde{F}_i$ this is equivalent to $\tilde{y}_i(t)$ is decreasing. We verify that $\tilde{y}_i(t)$ decreases in Lemma \ref{l:propertiesfe}.

The concavity of cumulative distribution functions $\tilde{F}_i$ on $[0, \infty)$ is equivalent to $x_if_i(x_i)$ is non-decreasing on $[0, 1]$, where $f_i$ is the density of the skill distribution. Moreover, it is equivalent to $\tilde{f}_i$, the density function of the negative logarithmic skill distribution, is non-increasing on $[0, \infty)$. The distribution function for the negative logarithmic skill $\tilde{x}_i$ is related to the distribution for $x_i$ through $\tilde{F}_i(\tilde{x}_i) = 1 - F_i(\exp(-\tilde{x}_i))$. The distribution function $\tilde{F}_i$ then has a corresponding density function:
\begin{equation}
\tilde{f}_i(\tilde{x}_i) = \frac{\text{d}}{\text{d} \tilde{x}_i} \tilde{F}_i(\tilde{x}_i) = -  \frac{\text{d}}{\text{d} \tilde{x}_i} F_i(\exp( - \tilde{x}_i)) = f_i (\exp( - \tilde{x}_i)) \exp( - \tilde{x}_i) .
\end{equation}
So, $\tilde{f}_i$ is non-increasing is equivalent to $x_i f_i(x_i)$ is non-decreasing. Therefore, to prove our results, we use Assumption \ref{assumption1}.


Conditions 1 to 3 directly follow since $\{ \tilde{y}_i \}$ is a solution to the system of functional equations (\ref{eq:lambda_definition}) and (\ref{eq:t_underbar_sum}). To verify the fourth condition, we express the function $l(t)$ in terms of $\tilde{y}_i(t)$ using the definition of $\tilde{y}_i(t)$ in (\ref{e:ytildedef}):
\begin{equation}\label{eq:l_using_y}
l(t) = m \lambda(t) - \sum_{i = 1}^{n} \tilde{y}_i(t). 
\end{equation}
The fourth condition holds if and only if the right-hand side of equation (\ref{eq:l_using_y}) is increasing for $0 \leq t \leq q$. We verify this in Lemma \ref{l:propertiesfe}. 

To verify Condition 5, we rewrite the average logarithmic loss for mediocre workers and jobs $h(x)$ in terms of $\tilde{y}_i$:
\begin{align*}
    h(x) = \frac{1}{1 - x}\sum_{i = 1}^{n} \int_{\underline{t}_i(x)}^{1 - \bar{t}_i(x)} \log I_i(t)\,\text{d}t = -\frac{1}{1 - x}\sum_{i = 1}^{n} \int_{\tilde{y}_i(x) - \lambda(x)}^{\tilde{y}_i(x)}z_i\,\text{d} \tilde{F}_i(z_i). 
\end{align*}
The equality follows by a change of variables as $z_i = - \log I_i(t)$, which by definition is distributed according to $\tilde{F}_i$. By equation (\ref{e:ytildedef}) it follows that $-\log I_i(\underline{t}_i(x)) = \tilde{y}_i(x)$ and by the third property of Proposition \ref{e:propbbapp} it follows that $- \log I_i(1 - \bar{t}_i(x)) = - \log I_i (\underline{t}_i(x)) - \lambda(x) = \tilde{y}_i(x) - \lambda(x)$, thereby giving the new bounds of integration. In sum, the fifth condition, $l(q) = h(q)$, holds if and only if 
\begin{equation}\label{e:condition5}
-m \lambda(q) + \sum_{i = 1}^{n}\tilde{y}_i(q) = \frac{1}{1 - q}\sum_{i = 1}^n\int_{\tilde{y}_i(q) - \lambda(q)}^{\tilde{y}_i(q)}z\,\text{d}\tilde{F}_i(z).
\end{equation}

\noindent Lemma 3.2 in \citet{Jakobsons:2016} states the following:


\begin{lemma}\label{l:propertiesfe}
Suppose all cumulative distribution functions $\tilde{F}_i$ are concave on $[0, \infty)$. Let $\{ \tilde{y}_i \}$ and $\lambda$ solve the system of functional equations (\ref{eq:lambda_definition}) and (\ref{eq:t_underbar_sum}). Denote
\begin{equation}
q = \inf\{q \in [0, 1] \colon l(q) \ge h(q)\} \label{e:qcutoof}
\end{equation}
Then on the interval $(0, q)$ for each $i$:
\begin{itemize}[noitemsep]
    \item $0 < \lambda(t) < \tilde{y}_i(t)$
    \item $\lambda(t)$ and $\tilde{y}_i(t)$ are decreasing
    \item $\tilde{y}_i(t) - \lambda(t)$ is increasing
    \item $l(t)$ is increasing
\end{itemize}
\end{lemma}



\noindent We use Lemma \ref{l:propertiesfe} to establish that the fourth and fifth condition are satisfied. Condition 4 follows from the fourth bullet of Lemma \ref{l:propertiesfe}. To verify the fifth condition, we consider three different cases. First, suppose the threshold $q$ constructed in Lemma \ref{l:propertiesfe} is neither equal to $0$ nor equal to $1$. Then, it follows from continuity of $l$ and $h$ that $l(q) = h(q)$ by the definition of $q$, and hence the fifth condition holds automatically. The second case, where $q = 1$, is equivalent to the absence of the mixed set, and hence we do not need to verify Condition 5 and 6 of Proposition \ref{e:propbbapp}. Finally, the case $q = 0$ is equivalent to the absence of countermonotonic sets. In this case, we only need to check Condition 6.

To demonstrate Condition 6 we prove the existence of an assignment such that the output loss is constant across teams. This loss is represented by:
\begin{equation}
x_1 \dots x_n = \exp\Big(-\sum_{i=1}^{n} y_i\Big)
\end{equation}
where $y_i = -\log x_i$ is the negative logarithmic skill of a worker. We can hence equivalently state the mixability condition as there exists an assignment of negative logarithmic skills such that the sum of logarithmic skills across teams remains constant.

To determine the distribution of negative logarithmic skills for mediocre workers, consider the following. The skill distribution is $F_i$ confined to the interval $[I_i(\underline{t}_i(q)), I_i(1 - \bar{t}_i(q))]$. The distribution of negative logarithmic skills is $\tilde{F}_i$; the distribution of negative logarithmic skills is thus $\tilde{F}_i$ restricted to the interval $[- \log I_i(1 - \bar{t}_i(q)), -\log I_i(\underline{t}_i(q))]$. Using equations (\ref{eq:lambda_definition}) and (\ref{e:ytildedef}), $-\log I_i (\underline{t}_i(q)) = \tilde{y}_i(q)$ as well as $-\log I_i(1-\bar{t}_i(x)) = \tilde{y}_i(x)-\lambda(x)$. Thus, the distribution of negative logarithmic skills $\tilde{F}_i$ restricted to the interval \([\tilde{y}_i(q) - \lambda(q), \tilde{y}_i(q)]\). Finally, the total mass of mediocre workers is the same between all distributions and is equal to \(1-q\). We normalize all truncated distributions by this constant to make them probabilistic.

Given the distribution of negative logarithmic skills and the normalization of the truncated distribution, we can write the mixability condition in the following lemma.

\begin{lemma}
The mixability condition, Condition 6 of Proposition \ref{e:propbbapp}, is satisfied if and only if there exists an assignment, denoted \(\tilde{\pi}\), such that:
\begin{itemize}[noitemsep]
    \item The marginal projections of assignment $\tilde{\pi}$ are \(\frac{1}{1-q}\tilde{F}_i\), which are supported on the intervals \([\tilde{y}_i(q) - \lambda(q), \tilde{y}_i(q)]\).
    \item This assignment is supported on the hyperplane defined by \(y_1 + \dots + y_{n} = C\), where \(C\) is a constant representing the sum of negative logarithmic skills across teams.
\end{itemize}
\end{lemma}

\noindent We show mixability under Assumption \ref{assumption1} as discussed above. This assumption implies that each cumulative distribution function $\tilde{F}_i$ of negative logarithmic skills is concave over the interval \([0, \infty)\). Given that the density function of a distribution is the derivative of its cumulative distribution function, Assumption \ref{assumption1} means that the density function of negative log skills is non-increasing.

A direct result is that each distribution of negative log skills is supported on the interval $[\tilde{y}_i(q) - \lambda(q), \tilde{y}_i(q)]$, and that its density function decreases within this interval. Building on this, we introduce Theorem 3.2 of \citet{Wang:2016} which provides necessary and sufficient conditions for mixability under these constraints.

\begin{theorem}[Theorem 3.2 of \citet{Wang:2016}]
Let $\{F_i\}$ be a collection of probability distributions. If each distribution $F_i$ is supported on an interval $[l_i, r_i]$ and has a decreasing density function, then there exists an assignment $\tilde{\pi}$ supported on the hyperplane $\tilde{y}_1 + \dots + \tilde{y}_n = C$ with marginal distributions $\{F_i\}$ respectively if and only if for all $i$:
\begin{equation*}
(r_i - l_i) + \sum_{j = 1}^n l_j \leq \sum_{j = 1}^n \int_{l_j}^{r_j} x_j \text{d} F_j.
\end{equation*}
\end{theorem}

For our problem, this result implies that the distributions of negative logarithmic skills of mediocre workers are mixable if and only if the following inequality is satisfied for each $i$:
\begin{equation}
\lambda(q) + \sum_{i = 1}^n \big( \tilde{y}_i(q) - \lambda(q) \big) \leq \frac{1}{1 - q}\sum_{i = 1}^n \int_{\tilde{y}_i(q) - \lambda(q)}^{\tilde{y}_i(q)} z_i \text{d} \tilde{F}_i(z_i), 
\end{equation}
since $l_k = \tilde{y}_k(q) - \lambda(q)$, $r_k = \tilde{y}_k(q)$, and $F_k = \frac{\tilde{F}_k}{1 - q}$ in our environment. By equation (\ref{e:condition5}) we find that this equation holds with equality if $q\in(0,1)$. Under Condition 5 and Assumption \ref{assumption1}, all conditions are thus met. When $q=0$, $l(q) \geq h(q)$ by the definition of $q$ in (\ref{e:qcutoof}). Hence, the distributions of negative logarithmic skills of mediocre workers are mixable.

\subsection{General Production Functions} \label{a:appprod}

We generalize the functional form of the production function.

\vspace{0.5 cm}
\noindent \textbf{Generalizing Technology for $n=3$}. In the case of two workers, the technology is generalized to additionally incorporate interaction terms between workers and coworkers $x_{1}x_{2}$, workers and firms $x_{1}z$, and coworkers and firms $x_{2}z$. Let the technology be given by:
\begin{equation}
y(x_{1},x_{2},z)=\varphi_{0}+\varphi_{1}x_{1}+\varphi_{2}x_{2}+\varphi_{3}z-\varphi_{12}x_{1}x_{2}-\varphi_{13}x_{1}z-\varphi_{23}x_{2}z-x_{1}x_{2}z,\label{eq:your_equation_label-1}
\end{equation}
where $\varphi_{ij}$ are non-negative to ensure submodularity.

\begin{proposition} Let the technology $y(x_{1},x_{2},z)$
take the form (\ref{eq:your_equation_label-1}). Suppose the
distributions of workers and projects, $F_{x}$ and $F_{z}$ respectively,
satisfy Assumption 1 whereas $\tilde{x}f_{\tilde{x}}\left(\tilde{x}\right)$
is non-decreasing for $\tilde{x}\in\{x+\varphi_{13},x+\varphi_{23},z+\varphi_{12}\}$.
Then  a mixed and countermonotonic assignment  is optimal. \end{proposition}

\begin{proof}
Without loss, we disregard the constant and linear
terms in the technology $y(x_{1},x_{2},z)$, as their contribution to the output loss is independent of the assignment. Consequently, we
express the technology as: 
\begin{equation*}
\hat{y}(x_{1},x_{2},z)=-(x_{1}+\varphi_{23})(x_{2}+\varphi_{13})(z+\varphi_{12}).
\end{equation*}
Let $\hat{z}=z+\varphi_{12}$, $\hat{x}_{1}=x_{1}+\varphi_{23}$,
and $\hat{x}_{2}=x_{2}+\varphi_{13}$. With these substitutions, the
technology function takes its original form $\hat{y}(\hat{x}_{1},\hat{x}_{2},\hat{z})=-\hat{x}_{1}\hat{x}_{2}\hat{z}$.

Given that all $\varphi_{ij}\geq0$, the shifted variables remain
non-negative and are supported on the intervals $[\varphi_{ij},1+\varphi_{ij}]$.\footnote{Although the distributions of $\hat{x}_{1}$, $\hat{x}_{2}$, and
$\hat{z}$ are no longer confined to the unit interval, this does
not impede our proof of existence and optimality. Our proofs neither
leverage the specific scaling of these variables within the unit interval,
nor depend on it in any essential way. We only require each variable
to be non-negative, a condition that is preserved under the transformation.} Denote the distributions corresponding to these variables respectively as $\hat{F}_{1}$,
$\hat{F}_{2}$, and $\hat{F}_{z}$. These distributions satisfy Assumption
1 if and only if the density functions $f_{x}$ and $f_{z}$ of the
original skills satisfy the proposition's stated assumptions. Therefore,
invoking \Cref{proof:support2} to \Cref{a:existencemc}, a mixed and countermonotonic assignment
$\hat{\pi}\in\Pi(\hat{F}_{1},\hat{F}_{2},\hat{F}_{z})$ exists
and solves the assignment problem in this transformed space. The
push-forward image of this assignment $\pi$ in the space of variables
$(x_{1},x_{2},z)$ provides a solution to the initial problem, completing
the proof.
\end{proof}

\noindent The argument in the proof also implies the following corollary.

\begin{corollary}
Consider technology $y(x_{1},x_{2},x_{1}x_{2},x_{2}z,x_{1}z,x_{1}x_{2}z)$.
Suppose its first-order Taylor expansion around the arguments of the
function has negative coefficients on each interaction term
$(x_{1}x_{2},x_{2}z,x_{1}z,x_{1}x_{2}z)$ to ensure submodularity.
Then  a mixed and countermonotonic assignment is optimal.
\end{corollary}

\noindent In sum, our results apply to general functions
of interactions (single, pairwise, triple) in the first-order sense.
The main insight that allows this generalization is the change of
variables.

\vspace{0.5 cm}
\noindent \textbf{Generalizing Technology with $n$ Distributions}. With $n$ distributions, a somewhat different result applies.\footnote{The change of variables argument does not extend when
there are multiple interactions of order higher than two (such as $x_{1}x_{2}z$ or $x_{1}x_{2}x_{3}z$).}

Consider a general technology $y(x_{1},\dots,x_{m},z)$
with heterogeneous workers $\{x_{1},\ldots,x_{m}\}$. For
ease of notation, let the project value $z$ be denoted as $z=x_{m+1}=x_{n}$. Consider a second-order Taylor expansion of this technology,
which is a second-order polynomial in $x_{1},\ldots,x_{n}$, that
can be expressed as: 
\begin{equation}
y(x_{1},\ldots,x_{n})=a_{0}+\sum_{i=1}^{n}a_{i}x_{i}+\sum_{i=1}^{n}\sum_{j=1}^{n}a_{ij}x_{i}x_{j}.\label{e:polynomial-1e}
\end{equation}
Specifically, consider the case where the matrix $A=\{a_{ij}\}$ has rank one and
is elementwise non-positive to ensure submodularity.\footnote{Since the matrix has rank one, each
element $a_{ij}$ can be represented as $a_{ij}=-\lambda_{i}\lambda_{j},$
where $\lambda_{i}\ge0$ for all $i$.}

\begin{proposition} Consider production technology (\ref{e:polynomial-1e}) with $\lambda_{i}\geq0$ and $f_i(x_i)$ is non-increasing. Define transformed skills $\hat{x}_{i}=\exp(-\lambda_{i}x_{i})$,
with $\hat{F}_{x_{i}}$ denoting the corresponding distributions of
$\hat{x}_{i}$. Then, there exists a mixed and countermonotonic assignment
$\hat{\pi}\in\Pi(\hat{F}_{x_{1}},\ldots,\hat{F}_{x_{n}})$ such
that its push-forward image in the space of original skills $(x_{1},\ldots,x_{n})$
is an optimal solution. \end{proposition}

\begin{proof}
We disregard the constant and linear terms in technology (\ref{e:polynomial-1e})
as their contribution to the loss is independent of the assignment. Thus, we focus on the technology:
\begin{equation*}
\hat{y}(x_{1},\ldots,x_{n})=-\sum_{i=1}^{n}\sum_{j=1}^{n}\lambda_{i}\lambda_{j}x_{i}x_{j}=-\big(\sum_{i=1}^{n}\lambda_{i}x_{i}\big)^{2}.
\end{equation*}
With $\hat{x}_{i}=\exp(-\lambda_{i}x_{i})$, we write this technology as $\hat{y}(\hat{x}_{1},\ldots,\hat{x}_{n})=-(\sum_{i=1}^{n}\log \hat{x}_{i} )^{2}$.

First, we verify that each distribution $\hat{F}_{x_{i}}$ satisfies Assumption 1. Observe that: 
\begin{equation*}
\hat{f}_{i}(\hat{x}_{i})=\frac{1}{\lambda_{i}\hat{x}_{i}}f_{i}(x_{i}),
\end{equation*}
which implies that the condition $\hat{x}_{i}\hat{f}_{i}(\hat{x}_{i})$
is non-decreasing is equivalent to the density function $f_{i}(x_{i})$
being non-increasing. Thus, by invoking \Cref{pf:branchbowl} and \Cref{a:existencemc}, we deduce the existence
of a mixed and countermonotonic assignment $\hat{\pi}\in\Pi(\hat{F}_{x_{1}},\ldots,\hat{F}_{x_{n}})$.

In the proof of optimality for a mixed and countermonotonic assignment,
we show using the majorization inequality that the distribution
of the sum of logarithmic skills under $\hat{\pi}$, that is, the distribution
of $\sum\log \hat{x}_{i}$, is minimal in the sense of second-order
stochastic dominance compared to any other assignment $\pi$. Formally,
for any convex function $h$ and alternative assignment $\pi$, the
inequality 
\begin{equation*}
\int h\Big(\sum_{i=1}^{n}\log \hat{x}_{i} \Big) \text{d} \hat{\pi} \leq \int h\Big(\sum_{i=1}^{n}\log \hat{x}_{i} \Big) \text{d} \pi
\end{equation*}
holds. Taking $h(t)=t^{2}$, we find that $\hat{\pi}$ maximizes
total output of the transformed technology $\hat{y}(\hat{x}_{1},\dots,\hat{x}_{n})=-(\sum \log \hat{x}_{i} )^{2}$, and the push-forward of assignment $\hat{\pi}$ to the space of original skills
$(x_{1},\dots,x_{n})$ solves the initial sorting problem.\end{proof}

\subsection{Dual Problem} \label{s:duality_kellerer}

To characterize the solution to the dual problem, we use a generalization of Kantorovich duality for multiple marginal distributions \citep{Kellerer:1984}.\footnote{An instructive exposition of \citet{Kellerer:1984} is presented in \citet{Rachev:1998}.} The duality result states that the optimal value for the planning problem coincides with the optimal value for the dual problem, and that the dual solution exists in the class of integrable functions. 

\vspace{0.5 cm}
\noindent \textbf{Duality}. Let $X$ and $Z$ be compact subsets of the set of real numbers, and $F_{x}$ and $F_z$ be probability measures on $X$ and $Z$ respectively. Let loss function $\ell(x_1, \dots, x_m, z)$ be continuous on $X \times \dots \times X \times Z$ and $\Pi(F_x,\dots,F_x,F_z)$ denote the set of measures $\pi$ on $X \times \dots \times X \times Z$ with fixed projections $\text{pr}_{x_i}(\pi) = F_x$ and $\text{pr}_{z}(\pi) = F_z$. Then,
\begin{equation}
\min_{\pi \in \Pi} \int \ell(x_1, \dots, x_m, z) \text{d} \pi = \max \sum\limits_{i=1}^m \int \hat{w}_i(x_i) \text{d} F_x + \int \hat{v}(z) \text{d} F_z,
\end{equation}
where the supremum is taken with respect to the collection of integrable functions $\hat{w}_i \in L^1(X, F_x)$ and $v \in L^1(Z, F_z)$ subject to the constraint $\sum\limits_{i=1}^m \hat{w}_i(x_i) + \hat{v}(z) \leq \ell(x_1, \dots, x_m, z)$ for any triplet $(x_1, \dots, x_m, z) \in X \times \dots \times X \times Z$.

\vspace{0.5 cm}
\noindent Applying the duality theorem to our problem, we establish that the dual solution attains the same value as the planning problem and that the dual solution exists in the class of integrable functions. We establish that the solution exists in the class of continuous functions. 

\vspace{0.05 cm}
\begin{lemma}
For any dual solution $(w_1,\dots,w_m,v)$, there exist continuous functions $(\hat{w}_1,\dots,\hat{w}_m,\hat{v})$ such that $\hat{w}_i = w_i$ and $\hat{v} = v$ almost everywhere that solves the dual problem. 
\end{lemma}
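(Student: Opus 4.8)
The plan is to replace the given dual solution by its iterated $c$-transforms (infima against the loss $\ell$), which are automatically Lipschitz because $\ell(x_1,x_2,z)=x_1x_2z$ has all first partials bounded by $1$ on $X\times X\times Z=[0,1]^3$, and which coincide with the original functions $F$-almost everywhere because a $c$-transform never decreases the dual objective while the original solution is already optimal. Throughout I work with the dual in the form of the duality theorem above: $(w_1,w_2,v)$ is feasible, meaning $w_1(x_1)+w_2(x_2)+v(z)\le \ell(x_1,x_2,z)$ for all triplets, and $\int w_1\,\mathrm dF_x+\int w_2\,\mathrm dF_x+\int v\,\mathrm dF_z$ attains the optimal transport value.

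I would first set
\begin{equation}
\hat v(z):=\inf_{(x_1,x_2)\in X\times X}\bigl(\ell(x_1,x_2,z)-w_1(x_1)-w_2(x_2)\bigr),
\end{equation}
then $\hat w_1(x_1):=\inf_{(x_2,z)}\bigl(\ell(x_1,x_2,z)-w_2(x_2)-\hat v(z)\bigr)$, and finally $\hat w_2(x_2):=\inf_{(x_1,z)}\bigl(\ell(x_1,x_2,z)-\hat w_1(x_1)-\hat v(z)\bigr)$. The order matters: at each step the constraint satisfied after the previous step shows that the current triple (with the not-yet-transformed functions) is still feasible, which gives both the preservation of the dual constraint $\hat w_1(x_1)+\hat w_2(x_2)+\hat v(z)\le\ell(x_1,x_2,z)$ and the pointwise inequalities $\hat v\ge v$, $\hat w_1\ge w_1$, $\hat w_2\ge w_2$. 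Hence $(\hat w_1,\hat w_2,\hat v)$ is feasible with objective at least that of $(w_1,w_2,v)$; since the latter is optimal the two objectives are equal, so $\int(\hat v-v)\,\mathrm dF_z+\int(\hat w_1-w_1)\,\mathrm dF_x+\int(\hat w_2-w_2)\,\mathrm dF_x=0$ with each integrand nonnegative, forcing $\hat v=v$ $F_z$-a.e. and $\hat w_i=w_i$ $F_x$-a.e.

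It remains to check continuity of $\hat v,\hat w_1,\hat w_2$. Each is an infimum over a parameter of maps which are affine in the free variable with slope a product of two coordinates, hence slope in $[0,1]$; an infimum of a family of $1$-Lipschitz functions is $1$-Lipschitz as soon as it is finite somewhere. So the only point to rule out is that a transform is identically $-\infty$: the same affine-slope estimate shows each transform is either finite everywhere or $-\infty$ everywhere, and the latter is impossible since, e.g., $\hat v\ge v$ with $v\in L^1(F_z)$ gives $\int\hat v\,\mathrm dF_z>-\infty$ (and analogously for $\hat w_1,\hat w_2$ using $w_1,w_2\in L^1(F_x)$). I expect this finiteness bookkeeping, together with the careful choice of the order of the three transforms so that feasibility is restored before each successive infimum is taken, to be the only delicate point; everything else is the standard double-convexification argument. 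Continuity of $(\hat w_1,\hat w_2,\hat v)$ then yields continuity of the surplus $S=y-w_1-w_2-v$ on the compact cube, as used in the main text.
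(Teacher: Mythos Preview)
Your proposal is correct and follows essentially the same iterated $c$-transform argument as the paper: both replace the dual solution by successive conjugates, observe the pointwise monotonicity in each step, deduce a.e.\ equality from optimality, and obtain Lipschitz continuity from the fact that $\ell$ (equivalently $y$) has partials bounded by $1$ on $[0,1]^3$. The only cosmetic differences are that the paper works with the output $y$ and $\sup$ (so $\hat w_i\le w_i$) whereas you work with the loss $\ell$ and $\inf$ (so $\hat w_i\ge w_i$), and the order in which the three transforms are taken; your explicit treatment of the finiteness issue is, if anything, slightly cleaner than the paper's.
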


\begin{proof}
For any dual solution $(w_1,\dots,w_m,v)$ we define:
\begin{equation}
\hat{w}_1(x_1) := \sup_{x_2,\dots,x_m,z} \; y(x_1,\dots,x_m,z) - \sum\limits_{i=2}^m w_i(x_i) - v(z) , \label{e:uniqueness_step}
\end{equation}
where $\hat{w}_1(x_1) \leq w_1(x_1)$ since $w_1(x_1) \geq y(x_1, \dots, x_m, z) - \sum_{i=2}^m w_i(x_i) - v(z)$. By equation (\ref{e:uniqueness_step}), we also have $\hat{w}_1(x_1) \geq y(x_1, \dots, x_m, z) - \sum_{i=2}^m w_i(x_i) - v(z)$, or equivalently, $\hat{w}_1(x_1) + \sum_{i=2}^m w_i(x_i) + v(z) \geq y(x_1, \dots, x_m, z)$ so that $(\hat{w}_1, w_2, \dots, w_n, v)$ is indeed a feasible solution to the dual problem (\ref{e:pp_dual}). Since through $\hat{w}_1(x_1) \leq w_1(x_1)$, it follows that 
\begin{equation*}
\int \hat{w}_1(x_1) \text{d} F_x + \sum\limits_{i=2}^m w_i(x_i) \text{d} F_x  + \int v(z) \text{d} F_z  \leq \int w_1(x_1) \text{d} F_x + \sum\limits_{i=2}^m w_i(x_i) \text{d} F_x + \int v(z) \text{d} F_z.
\end{equation*}
Thus, $(\hat{w}_1,w_2, \dots, w_m, v)$ is a dual solution. Furthermore, $\hat{w}_1 = w_1$ almost everywhere. Suppose not, then instead $\hat{w}_1 < w_1$, which would imply that:
\begin{equation*}
\int \hat{w}_1(x_1) \text{d} F_x + \sum\limits_{i=2}^m w_i(x_i) \text{d} F_x  + \int v(z) \text{d} F_z  < \int w_1(x_1) \text{d} F_x + \sum\limits_{i=2}^m w_i(x_i) \text{d} F_x + \int v(z) \text{d} F_z.
\end{equation*}
which would contradict that $(w_1, \dots, w_m, v)$ is a dual solution. 

We next establish that $\hat{w}_1$ is a continuous function. To show this, let $\hat{w}_1(x_1 ; x_2, \dots, x_m, z) := y(x_1, \dots, x_m, z) - \sum_{i=2}^m w_i(x_i) - v(z)$, that is, (\ref{e:uniqueness_step}) holding constant all coworkers $(x_2, \dots, x_m)$ and the firm $z$. Given the technology (\ref{yp}), and since $x_i \in [0,1]$ and $z \in [0,1]$, $\hat{w}_1(x_1 ; x_2, \dots, x_m, z)$ is a linear function in $x_1$ with slope less than one in absolute value. This implies $\hat{w}_1(x_1 ; x_2, \dots, x_m, z)$ is Lipschitz continuous. Since $\hat{w}_1(x_1)$ can be represented as $\hat{w}_1(x_1) = \sup \limits_{x_2, \dots, x_m, z} \hat{w}_1(x_1 ; x_2, \dots, x_m, z)$, $\hat{w}_1(x_1)$ is Lipschitz continuous as the supremum of uniformly Lipschitz continuous functions is Lipschitz continuous. Hence, $\hat{w}_1(x_1)$ is continuous. 


Similarly, we sequentially define $\{ \hat{w}_i \}$ and $\hat{v}$: 
\begin{align*}
\hat{w}_i(x_i) & := \hspace{0.22 cm} \sup_{x_{j \neq i}, z} \hspace{0.13 cm} y(x_1, \dots, z) - \sum\limits_{j=1}^{i-1} \hat{w}_j(x_j) - \sum\limits_{j=i+1}^m w_j(x_j) - v(z),\\
\hat{v}(z) & := \sup_{x_1, \dots, x_m} y(x_1, \dots, z) - \sum\limits_{i=1}^m \hat{w}_i(x_i).
\end{align*}
By sequentially applying the above argument, we obtain that $(\hat{w}_1, \dots, \hat{w}_m, \hat{v})$ is a dual solution with $\hat{w}_i = w_i$ for all $i$ and $\hat{v} = v$ almost everywhere. Similarly, the same arguments shows that all $\hat{w}_i$ and $\hat{v}$ are continuous.\end{proof} 

\subsection{Proposition \ref{prop:cont_derivative}} \label{proof:cont_derivative}

For any worker $x_{1,0}$ there exist coworkers $\{ x_{j,0} \}_{j=2}^m$ and a firm $z_0$ so that team $(x_{1,0}, \dots, x_{m, 0}, z_0)$ is contained in the support of an optimal assignment, implying that:
\begin{equation*}
w(x_{1,0}) + \dots + w(x_{m,0}) + v(z_{0}) = y(x_{1,0}, \dots, x_{m,0}, z_{0}) = z_{0} + x_{1,0} m_x(x_{1,0}) ,
\end{equation*}
with the second equality following from the marginal worker product (\ref{mwp}). For any other worker $x_1 $, it must by the constraint be that $w(x_{1}) + \dots + w(x_{m,0}) + v(z_{0}) \geq z_{0} + x_{1} m(x_{1,0})$. Combining these expressions, it holds that for any $x_0$ and $x_1 \in [0,1]$:
\begin{equation}
w(x_1) - w(x_{1,0}) \geq m(x_{1,0}) ( x_{1} -  x_{1,0} ) . \label{e:1w}
\end{equation}

Similarly, by interchanging $x_{1,0}$ and $x_1$ in the previous paragraph, we obtain:
\begin{equation}
w(x_{1,0}) - w(x_{1}) \geq m(x_{1}) ( x_{1,0} -  x_{1} ) . \label{e:2w}
\end{equation}
Combining (\ref{e:1w}) and (\ref{e:2w}), dividing by $(x_{1}-x_{1,0})$ yields
\begin{equation}
m(x_{1}) \geq \frac{w(x_{1}) - w(x_{1,0})}{x_{1}-x_{1,0}} \geq  m(x_{1,0}).
\end{equation}
Taking limits as $x_1 \rightarrow x_{1,0}$, this equation shows that the function $w(x)$ has a derivative at $x_1 = x_{1,0}$ and that this derivative is equal to $m(x_{1,0})$. Since the marginal worker product $m$ is a continuous function by Proposition \ref{prop:marginal_product}, the derivative of $w$ is continuous. The result is similarly established for $v'(z)$.

\subsection{Quantitative Robustness and Extensions} \label{a:appquant}

The quantitative analysis in the main text assumes that teams are of size three – two workers and a project. In this appendix, we demonstrate that our model can also account for the decomposition of earnings variation within and across firms with teams of sizes four, five, and six. We also find that the non-targeted average coworker earnings are robust to team size. As the number of team members increases, the firm and worker effect become more muted. These results are summarized in \Cref{p:three}, \Cref{p:four}, and \Cref{p:five}.

\begin{table}
\def\arraystretch{1.95}
\begin{center}
\caption{Parametric Identification with Three Workers}\label{p:three}
\begin{tabular}{l|ccc|ccc}
\hline  \hline
 \multicolumn{7}{c}{Model and Data Earnings Decomposition} \\ \hline 
\hspace{1.5 cm} & \multicolumn{3}{c|}{Data}     & \multicolumn{3}{c}{Model}     \\
Moment  \hspace{0.1 cm} & \hspace{0.1 cm} 1981 \hspace{0.1 cm}  & \hspace{0.1 cm} 2013 \hspace{0.1 cm} & \hspace{0.1 cm} change \hspace{0.1 cm} & \hspace{0.1 cm} 1981 \hspace{0.1 cm} & \hspace{0.1 cm} 2013 \hspace{0.1 cm} & \hspace{0.1 cm} change \hspace{0.1 cm} \\ \hline
Between & 0.34     & 0.42 & \phantom{-}0.08 & 0.34     & 0.42 & \phantom{-}0.08 \\
Within & 0.66     & 0.58 & -0.08 & 0.66     & 0.58 & -0.08 \\ 
 \hline \hline
 \multicolumn{7}{c}{Model and Data Coworker Earning} \\ \hline 
\hspace{1.5 cm} & \multicolumn{3}{c|}{Data}     & \multicolumn{3}{c}{Model}     \\
Percentile  \hspace{0.1 cm} & \hspace{0.1 cm} 1981 \hspace{0.1 cm}  & \hspace{0.1 cm} 2013 \hspace{0.1 cm} & \hspace{0.1 cm} change \hspace{0.1 cm} & \hspace{0.1 cm} 1981 \hspace{0.1 cm} & \hspace{0.1 cm} 2013 \hspace{0.1 cm} & \hspace{0.1 cm} change \hspace{0.1 cm} \\ \hline
25 & 10.01     & 10.07 & 0.06 & 10.19    & 10.28 & 0.09 \\
50 & 10.30     & 10.45 & 0.15 & 10.44     & 10.61 & 0.17 \\
75 & 10.56     & 10.75 & 0.19 & 10.56     & 10.79 & 0.23 \\ 
90 & 10.67     & 10.98 & 0.31 & 10.55     & 10.82 & 0.27 \\ 
 \hline \hline
\end{tabular}
\begin{tabular}{l|cc|cc|cc}
\multicolumn{7}{c}{Model and Data Earnings Decomposition} \\ \hline 
\hspace{1.5 cm} & \multicolumn{2}{c|}{Model}     & \multicolumn{2}{c|}{Firm Effect} & \multicolumn{2}{c}{Worker Effect} \\
Moment  \hspace{0.1 cm} & \hspace{0.1 cm} 1981 \hspace{0.1 cm}  & \hspace{0.1 cm} 2013 \hspace{0.1 cm} & \hspace{0.1 cm} 2013 \hspace{0.1 cm} & \hspace{0.1 cm} change \hspace{0.1 cm} & \hspace{0.1 cm} 2013 \hspace{0.1 cm} & \hspace{0.1 cm} change \hspace{0.1 cm} \\ \hline
Between & 0.34     & 0.42   & 0.24 & -0.10  & 0.52 & \phantom{-}0.18 \\
Within & 0.66    & 0.58      & 0.76 & \phantom{-}0.10  & 0.48 & -0.18 \\ 
 \hline \hline
\end{tabular}
\end{center}
\end{table}

\begin{table}
\def\arraystretch{1.95}
\begin{center}
\caption{Parametric Identification with Four Workers}\label{p:four}
\begin{tabular}{l|ccc|ccc}
\hline  \hline
 \multicolumn{7}{c}{Model and Data Earnings Decomposition} \\ \hline 
\hspace{1.5 cm} & \multicolumn{3}{c|}{Data}     & \multicolumn{3}{c}{Model}     \\
Moment  \hspace{0.1 cm} & \hspace{0.1 cm} 1981 \hspace{0.1 cm}  & \hspace{0.1 cm} 2013 \hspace{0.1 cm} & \hspace{0.1 cm} change \hspace{0.1 cm} & \hspace{0.1 cm} 1981 \hspace{0.1 cm} & \hspace{0.1 cm} 2013 \hspace{0.1 cm} & \hspace{0.1 cm} change \hspace{0.1 cm} \\ \hline
Between & 0.34     & 0.42 & \phantom{-}0.08 & 0.34     & 0.42 & \phantom{-}0.08 \\
Within & 0.66     & 0.58 & -0.08 & 0.66     & 0.58 & -0.08 \\ 
 \hline \hline
 \multicolumn{7}{c}{Model and Data Coworker Earning} \\ \hline 
\hspace{1.5 cm} & \multicolumn{3}{c|}{Data}     & \multicolumn{3}{c}{Model}     \\
Percentile  \hspace{0.1 cm} & \hspace{0.1 cm} 1981 \hspace{0.1 cm}  & \hspace{0.1 cm} 2013 \hspace{0.1 cm} & \hspace{0.1 cm} change \hspace{0.1 cm} & \hspace{0.1 cm} 1981 \hspace{0.1 cm} & \hspace{0.1 cm} 2013 \hspace{0.1 cm} & \hspace{0.1 cm} change \hspace{0.1 cm} \\ \hline
25 & 10.01     & 10.07 & 0.06 & 10.24    & 10.28 & 0.04 \\
50 & 10.30     & 10.45 & 0.15 & 10.45     & 10.62 & 0.17 \\
75 & 10.56     & 10.75 & 0.20 & 10.52     & 10.77 & 0.25 \\ 
90 & 10.67     & 10.98 & 0.31 & 10.50     & 10.79 & 0.29 \\ 
 \hline \hline
\end{tabular}
\begin{tabular}{l|cc|cc|cc}
\multicolumn{7}{c}{Model and Data Earnings Decomposition} \\ \hline 
\hspace{1.5 cm} & \multicolumn{2}{c|}{Model}     & \multicolumn{2}{c|}{Firm Effect} & \multicolumn{2}{c}{Worker Effect} \\
Moment  \hspace{0.1 cm} & \hspace{0.1 cm} 1981 \hspace{0.1 cm}  & \hspace{0.1 cm} 2013 \hspace{0.1 cm} & \hspace{0.1 cm} 2013 \hspace{0.1 cm} & \hspace{0.1 cm} change \hspace{0.1 cm} & \hspace{0.1 cm} 2013 \hspace{0.1 cm} & \hspace{0.1 cm} change \hspace{0.1 cm} \\ \hline
Between & 0.34     & 0.42   & 0.22 & -0.12  & 0.49 & \phantom{-}0.15 \\
Within & 0.66    & 0.58      & 0.78 & \phantom{-}0.12  & 0.51 & -0.15 \\ 
 \hline \hline
\end{tabular}
\end{center}
\end{table}

\begin{table}
\def\arraystretch{1.95}
\begin{center}
\caption{Parametric Identification with Five Workers}\label{p:five}
\begin{tabular}{l|ccc|ccc}
\hline  \hline
 \multicolumn{7}{c}{Model and Data Earnings Decomposition} \\ \hline 
\hspace{1.5 cm} & \multicolumn{3}{c|}{Data}     & \multicolumn{3}{c}{Model}     \\
Moment  \hspace{0.1 cm} & \hspace{0.1 cm} 1981 \hspace{0.1 cm}  & \hspace{0.1 cm} 2013 \hspace{0.1 cm} & \hspace{0.1 cm} change \hspace{0.1 cm} & \hspace{0.1 cm} 1981 \hspace{0.1 cm} & \hspace{0.1 cm} 2013 \hspace{0.1 cm} & \hspace{0.1 cm} change \hspace{0.1 cm} \\ \hline
Between & 0.34     & 0.42 & \phantom{-}0.08 & 0.34     & 0.42 & \phantom{-}0.08 \\
Within & 0.66     & 0.58 & -0.08 & 0.66     & 0.58 & -0.08 \\ 
 \hline \hline
 \multicolumn{7}{c}{Model and Data Coworker Earning} \\ \hline 
\hspace{1.5 cm} & \multicolumn{3}{c|}{Data}     & \multicolumn{3}{c}{Model}     \\
Percentile  \hspace{0.1 cm} & \hspace{0.1 cm} 1981 \hspace{0.1 cm}  & \hspace{0.1 cm} 2013 \hspace{0.1 cm} & \hspace{0.1 cm} change \hspace{0.1 cm} & \hspace{0.1 cm} 1981 \hspace{0.1 cm} & \hspace{0.1 cm} 2013 \hspace{0.1 cm} & \hspace{0.1 cm} change \hspace{0.1 cm} \\ \hline
25 & 10.01     & 10.07 & 0.06 & 10.20    & 10.28 & 0.08 \\
50 & 10.30     & 10.45 & 0.15 & 10.43     & 10.62 & 0.19 \\
75 & 10.56     & 10.75 & 0.19 & 10.52     & 10.76 & 0.24 \\ 
90 & 10.67     & 10.98 & 0.31 & 10.51     & 10.79 & 0.28 \\ 
 \hline \hline
\end{tabular}
\begin{tabular}{l|cc|cc|cc}
\multicolumn{7}{c}{Model and Data Earnings Decomposition} \\ \hline 
\hspace{1.5 cm} & \multicolumn{2}{c|}{Model}     & \multicolumn{2}{c|}{Firm Effect} & \multicolumn{2}{c}{Worker Effect} \\
Moment  \hspace{0.1 cm} & \hspace{0.1 cm} 1981 \hspace{0.1 cm}  & \hspace{0.1 cm} 2013 \hspace{0.1 cm} & \hspace{0.1 cm} 2013 \hspace{0.1 cm} & \hspace{0.1 cm} change \hspace{0.1 cm} & \hspace{0.1 cm} 2013 \hspace{0.1 cm} & \hspace{0.1 cm} change \hspace{0.1 cm} \\ \hline
Between & 0.34     & 0.42   & 0.38 & \phantom{-}0.04  & 0.38 & \phantom{-}0.04 \\
Within & 0.66    & 0.58      & 0.62 & -0.04  & 0.62 & -0.04 \\ 
 \hline \hline
\end{tabular}
\end{center}
\end{table}

\end{document}